\newtheorem{theorem}{Theorem}
\newtheorem{lemma}{Lemma}
\newcommand{\y}{\ensuremath{\mathbf{y}}}
\newcommand{\z}{\ensuremath{\mathbf{z}}}
\newtheorem{definition}{Definition}
\newtheorem{remark}{Remark}
\begin{document}
	\title{Optimal Multistage Group Testing Algorithm for 3 Defectives} 
	
	\author{%
		\IEEEauthorblockN{Ilya~Vorobyev}
		\IEEEauthorblockA{
			Center for Computational and Data-Intensive Science and Engineering, \\
			Skolkovo Institute of Science and Technology\\
			Moscow, Russia 127051}
		\IEEEauthorblockA{
			Advanced Combinatorics and Complex Networks Lab, \\
			Moscow Institute of Physics and Technology\\  Dolgoprudny, Russia 141701}
		\IEEEauthorblockA{\textbf{Email}: vorobyev.i.v@yandex.ru}
	}

	\maketitle
	
	\begin{abstract}
		Group testing is a well-known search problem that consists in detecting of $s$
		defective members of a set of $t$ samples by carrying out tests on properly chosen subsets
		of samples. In classical group testing the goal is to find all defective elements by
		using the minimal possible number of tests in the worst case. 
		In this work, a multistage group testing problem is considered. Our goal is to construct a multistage search procedure, having asymptotically the same number of tests as the optimal adaptive algorithm. We propose a new approach to designing multistage algorithms, which allows us to construct a 5-stage algorithm for finding 3 defectives with the optimal number $3\log_2t(1+o(1))$ of tests.
	\end{abstract}

	
	\section{Introduction}\label{sec::intro}
	
	The group testing problem was introduced by Dorfman in~\cite{dorfman1943detection}. Suppose that we have a large set of samples, some of which are defective. Our task is to find all such elements by performing special tests. Each test is carried out on a properly chosen subset of samples. The result of the test is positive if there is at least one defective element in the tested subset; otherwise, the result is negative.  In this work we consider the noiseless case, i.e., the outcomes are always correct. We aim to design an algorithm that finds all defective elements using as few tests as possible.

	Two types of algorithms are usually considered in group testing. {\it Adaptive} algorithms can use the results of previous tests to determine which subset of samples to test at the next step. In {\it non-adaptive} algorithms all tests are predetermined and can be carried out in parallel.
	
	In this paper, we study multistage algorithms, which can be seen as a compromise solution to the group testing problem. An algorithm is divided into $p$ stages. Tests from the $i$th stage may depend on the outcomes of the tests from the previous stages. 
	
	We consider a problem, in which the total number of defective elements is equal to 
	$s$. Let $N_p(t,s)$ 
	be the minimal worst-case total number of tests needed to find all $s$ 
	defective members of a set of $t$ samples using at most $p$ stages; $N_{ad}(t, s)$ 
	stands for the minimal number of tests for adaptive algorithms. 
	
	
	In many applications, it is much cheaper and faster to perform tests in parallel. Unfortunately, non-adaptive algorithms require much more tests than adaptive ones. It  is known~\cite{d1982bounds,ruszinko1994upper,furedi1996onr} that for fixed $s$ non-adaptive algorithm needs at least $N_1=\Omega\left(\frac{s^2\log_2t}{\log_2s}\right)$ tests, whereas with the adaptive algorithm it is sufficient to use only $N_{ad}(t, s)=s\log_2t(1+o(1))$, $t\to\infty$, tests. 
	Rather surprisingly, for 2-stage algorithms it was proved that ${O(s \log_2 t)}$ tests are already sufficient~\cite{de2005optimal,rashad1990random,d2014lectures}. This fact emphasizes the importance of multistage algorithms.
	
	In this paper, we are interested in the constant
	$$
	C_p(s)=\varlimsup_{t\to \infty}\frac{N_p(t, s)}{\log_2t}
	$$
	for $p$-stage algorithms. For adaptive algorithms this constant $\varlimsup_{t\to \infty}\frac{N_{ad}(t, s)}{\log_2t}$ is equal to $s$. In general, our aim is to design a $p$-stage algorithm, which uses asymptotically the same number $s\log_2(t)(1+o(1))$ of tests as the optimal adaptive algorithm.

	\subsection{Related work}
	
	We refer the reader to the monographs~\cite{du2000combinatorial,Cicalese2013Fault-TolerantAlgorithms} for a survey on group testing and its applications. In this paper, only the number of tests needed in the worst-case scenario is considered. For the problem of finding the average number of tests in non-adaptive algorithms we refer the reader to~\cite{freidlina1975design} for $s=O(1)$ and to~\cite{mezard2011group, johnson2019performance} for $s\to\infty$. Also, in paper~\cite{berger2002asymptotic} the average number of tests for 2-stage algorithms was found in model, where each element is defective with probability $p(t)=t^{-\beta+o(1)}$, $\beta\in(0, 1)$.

	Non-adaptive algorithms for the search of at most $s$ defectives can be constructed from $s$-disjunctive (or superimposed) codes~\cite{kautz1964nonrandom, dyachkov1983survey}. Those codes were also investigated under the name of cover-free families~\cite{erdHos1985families}. 
	The best known asymptotic ($s\to\infty$) lower~\cite{d2014lectures} and upper~\cite{d2014bounds} bounds on $C_1(s)$ are as follows
	$$
	\frac{s^2}{4\log_2s}(1+o(1))\leq C_1(s) \leq \frac{s^2}{2\ln2}(1+o(1)).
	$$
	
	Numerical values for small $s$ can be found, for example, in Table 2 in~\cite{d2014lectures}. From these bounds, it follows that for $s\ge 11$ $C_1(s)>s$, i.e. it is impossible to construct a non-adaptive algorithm with asymptotically the same number of tests as in the optimal adaptive algorithm. Also, it is impossible for $s=2$; more precise, in~\cite{coppersmith1998new} the best lower and upper bounds on $C_1(2)$ were established
	$$
	2.0008\leq C_1(2)\leq 3.1898.
	$$
	It is natural to expect that $C_1(s)>s$ for $3\leq s \leq 10$ too, but it hasn't been proved yet.
	
	For the case of $p$-stage algorithms, $p>1$, the only known lower bound is information-theoretic one
	\begin{equation}\label{eq::th_inf_bound}
	C_p(s)\geq s.
	\end{equation}
	
	Group testing algorithms with 2-stages can be obtained from disjunctive list-decoding codes~\cite{dyachkov1983survey} and selectors~\cite{de2005optimal}. Both approaches provide the bound $C_2(s)=O(s)$, but the best results for disjunctive list-decoding codes give a better constant~\cite{d2014bounds}
	\begin{equation}
	C_2(s)\leq (e\ln 2) s(1+o(1)), \quad s\to\infty.
	\end{equation}
	
	In recent work~\cite{v2019twostage} with the help of another approach a new two-stage algorithm was constructed, which outperforms disjunctive list-decoding codes for fixed $s$, but has the same asymptotic for $s\to\infty$. However, for the case of 2 defectives 2 stage algorithm from~\cite{v2019twostage} uses $2\log_2t(1+o(1))$ tests, i.e. $C_2(2)=2$ and the algorithm achieves information-theoretic lower bound on the number of tests.

	This work continues the research started in papers~\cite{d2016hypergraph, v2019twostage}. 
	We prove $C_5(3)=3$ by providing new 5-stage algorithm, which finds 3 defectives using the optimal number ${3\log_2t(1+o(1))}$ of tests.
	
	
	\subsection{Our approach}
	To construct a new algorithm we use a hypergraph framework. Informally, we introduce a $s$-uniform hypergraph $H=(V, E)$, each vertex of which represents one sample. Suppose that we have already carried out some tests. We draw a hyperedge for every $s$-element set of samples, which could be equal to the unknown set of defectives, i.e. it agrees with the outcomes of all tests. Such a hypergraph represents all the information we have obtained from the tests so far. In most of the previous works~\cite{dyachkov1983survey, de2005optimal, v2019twostage}, the first stages of algorithms were constructed in such a way that the hypergraph $H$ would have only a constant amount of hyperedges. It seems that this condition is excessively strong; it requires too many tests at the first stage. In this paper we use such a set of tests for the first stage that the resulting hypergraph is sparse, i.e. the number of hyperedges in $H$ is almost linear on the number of non-isolated vertices. 
	Employing the sparsity of the hypergraph $H$ we explicitly construct subsequent stages to find defectives using approximately $\log_2|E(H)|$ tests. This approach gives us optimal algorithms achieving an information-theoretic lower bound on the number of tests for $s=2, 3$.

	\subsection{Outline}
	
	In Section~\ref{sec::pre}, we introduce the notation and formally describe the hypergraph approach to the group testing problem in general. As a warm-up, in Section~\ref{sec::2Def} we apply new idea to the simplest case $s=2$ to construct a 3-stage algorithm, which uses $2\log_2t(1+o(1))$ tests.
	The main result of the paper is presented in Section~\ref{sec::3Def}, in which the new 5-stage algorithm for finding 3 defectives with an optimal number of tests $3\log_2t(1+o(1))$ is described. Section~\ref{sec::conclusion} concludes the paper.
	
	\section{Preliminaries}\label{sec::pre}
	
	Throughout the paper we use $t$ and $s$ for the number of elements and defectives, respectively. By $[t]$ we denote the set ${\{1, 2\ldots, t\}}$. The binary entropy function $h(x)$ is defined as usual $$h(x)=-x\log_2(x)-(1-x)\log_2(1-x).$$

	A binary $(N \times t)$-matrix with $N$ rows $x_1, \dots, x_N$ and $t$ columns $x(1), \dots, x(t)$ 
	$$
	X = \| x_i(j) \|, \quad x_i(j) = 0, 1, \quad i \in [N],\,j \in [t]
	$$
	is called a {\em binary code of length $N$  and size $t$}.
	The number of $1$'s in the codeword $x(j)$, i.e., $|x(j)| = \sum\limits_{i = 1}^N \, x_i(j)= wN$,
	is called the {\em weight} of $x(j)$, $j \in [t]$ and parameter $w$, $0<w<1$, is the \textit{relative weight}.
	
	We represent $N$ non-adaptive tests with a binary $N\times t$ matrix $X=\|x_{i,j}\|$ in the following way. 
	An entry $x_{i,j}$ equal $1$ if and only if $j$th element is included in $i$th test.
	Let $u \bigvee v$ denote the disjunctive sum of binary columns $u, v \in \{0, 1\}^N$. 
	For any subset $S\subset[t]$ define the binary vector $$r(X,S) = \bigvee\limits_{j\in S}x(j),$$
	which later will be called the \textit{outcome vector}.
	By $S_{un}$, ${|S_{un}|=s}$, denote an unknown set of defectives. 
	
	
	\subsection{Hypergraph framework}\label{subsec::hypApp}
	Let us describe the hypergraph approach to the group testing problem. Suppose that we use a binary $N\times t$ matrix $X$ at the first stage. As a result of performed tests we get the outcome vector ${\y=r(X, S_{un})}$. Construct a hypergraph ${H(X, s, \y)=(V, E)}$ in the following way. The set of vertices $V$ coincides with the set of samples $[t]$. The set of hyperedges consists of all sets $S\subset[t]$, $|S|=s$, such that $r(X, S)=\y$. In other words, the set of hyperedges of the hypergraph $H(X, s, \y)$ represents all possible defective sets of size $s$. We want to design such a matrix $X$ for the first stage of an algorithm that the hypergraph $H(X, s, \y)$ has some good properties, which will allow us to quickly find all defectives at the next few stages.
	
	Previously known algorithms can be described using this terminology. Disjunctive list-decoding codes, selectors and methods from~\cite{v2019twostage} give a binary matrix $X$ such that the hypergraph $H(X, s, \y)$ has only a constant amount of hyperedges for all possible outcome vectors $\y$. Then we can test all non-isolated vertices individually at the second stage. 
	In the algorithm from~\cite{d2016hypergraph} the graph $H(X, 2, \y)$ has a small chromatic number, which also allows finding defectives quickly.

	\section{Algorithm for 2 defectives}\label{sec::2Def}
	
	For the simplest case $s=2$ we propose a 3-stage algorithm with the optimal number of tests $2\log_2t(1+o(1))$.
	\begin{theorem}\label{th::2def}
		$$C_3(2)=2.$$
	\end{theorem}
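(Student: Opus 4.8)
The bound $C_3(2)\ge 2$ is exactly the information-theoretic inequality \eqref{eq::th_inf_bound}, so the whole task is to exhibit a $3$-stage procedure using $2\log_2 t\,(1+o(1))$ tests. (Formally $C_3(2)\le C_2(2)$ since unused stages may be left empty, but the aim here is to run the new sparse-hypergraph idea of Subsection~\ref{subsec::hypApp} on the easiest case.) For the first stage I would pick a random binary code $X$ of length $N_1=2\log_2 t\,(1+o(1))$ whose $t$ columns are independent with i.i.d.\ $\mathrm{Bernoulli}(w)$ entries, perform the tests, read off $\y=r(X,S_{un})$ and build the graph $H(X,2,\y)$. Writing $S_{un}=\{a,b\}$, this pair is always an edge; I would classify every other (false) edge as either \emph{incident}, i.e.\ sharing a vertex with $\{a,b\}$ and so lying in one of the two stars around $a$ and $b$, or \emph{disjoint} from $\{a,b\}$. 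For a fixed defective pair the expected number of edges of each type is an explicit function of $w$ and $N_1$, obtained from the probability that a random column lies coordinatewise below $\y$ and, for incident edges, also covers the hole $\mathrm{supp}(x(b))\setminus\mathrm{supp}(x(a))$.

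The core of the proof, and the step I expect to be hardest, is a structural lemma: for a well-chosen $w$ and for every pair $\{a,b\}$ simultaneously, the graph $H$ is sparse, with $t^{o(1)}$ non-isolated vertices and (because its edges form two stars plus a few disjoint edges) a number of edges that is at most a constant times the number of non-isolated vertices, i.e.\ almost linear. One then shows by a union bound that a single random $X$ works for all pairs at once, so a fixed good matrix exists. The subtlety is that at the optimal budget $N_1\approx 2\log_2 t$ the false edges cannot be removed --- doing so for all pairs would cost about $3\log_2 t$ tests, as with disjunctive list-decoding codes --- so the expected number of disjoint false edges for a fixed pair is of order $t^{\,2-\alpha D(w)}$, where $N_1=\alpha\log_2 t$ and $D(w)=2(1-w)^2\log_2\frac{1}{1-w}+(2w-w^2)\log_2\frac{1}{2w-w^2}$ is the exponent coming from the two covering probabilities above.

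A short simplification gives $D(w)=h\bigl((1-w)^2\bigr)$, so $\max_w D(w)=h(1/2)=1$, attained at $w=1-2^{-1/2}$, and this is exactly what pins the constant to $\alpha=2$. Because the bound then sits at the threshold, the count does not tend to zero, and the union bound over all $\binom{t}{2}$ pairs has to be pushed through with a higher-moment (Poisson-type large-deviation) estimate to certify that no pair creates more than $t^{o(1)}$ false edges; for the incident stars the analogous exponent is strictly negative at $w=1-2^{-1/2}$, so those vanish more easily and do not bind. This threshold phenomenon --- tolerating, rather than eliminating, false edges --- is the whole point of the sparse approach, and controlling it uniformly over all pairs is where the real work lies.

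Granting this lemma, the remaining stages are light. After Stage~1 the two defectives lie among the $n'=t^{o(1)}$ non-isolated vertices of the sparse graph $H$ --- in fact $n'=O\!\left(\log_2 t/\log_2\log_2 t\right)$ --- so the residual problem is tiny. I would clear it in Stage~2 by testing the survivors individually, which already costs only $o(\log_2 t)$ tests; the freedom of a third stage, used essentially to mirror the $s=3$ scheme, costs nothing more. The total is $2\log_2 t\,(1+o(1))+o(\log_2 t)=2\log_2 t\,(1+o(1))$, which together with the lower bound yields $C_3(2)=2$.
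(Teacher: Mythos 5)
Your route is genuinely different from the paper's, and the difference is in how the identity $N+\log_2|E|\approx 2\log_2 t$ gets split between stages. The paper spends only $N\approx \frac{1}{h(p)-p}\log_2 t\approx 1.73\log_2 t$ on stage 1 (constant-weight columns, $p=1-\sqrt{0.5}$), accepts a graph with \emph{polynomially} many edges ($|E|\lesssim t^2 2^{-N}\approx t^{0.27}$, times small factors), and then recovers the defective edge by partitioning $E$ into $O\bigl((\log_2\log_2 t)^2\bigr)$ classes (two tests per class, stage 2) followed by binary search inside the identified class ($\approx 0.27\log_2 t$ tests, stage 3) --- so the third stage genuinely carries $\Theta(\log_2 t)$ tests. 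You instead spend the whole $2\log_2 t(1+o(1))$ up front so that the surviving graph is tiny and individual testing finishes the job; this is the extreme point of the same trade-off, and in fact uses only two stages. Your exponent bookkeeping is correct: $D(w)=h\bigl((1-w)^2\bigr)$, maximized at $w=1-2^{-1/2}$ with value $1$, which is precisely the paper's $p$ and its ``maximum $-1$ at $\omega=0.5$''. I believe your structural lemma is true as stated with the $O(\log_2 t/\log_2\log_2 t)$ bound, so the plan can be carried through --- but not quite by the argument you sketch.

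Two concrete gaps. First, you conflate $t^{o(1)}$ with $o(\log_2 t)$: a lemma guaranteeing only ``$t^{o(1)}$ non-isolated vertices'' (e.g.\ $2^{\sqrt{\log_2 t}}$) does \emph{not} make stage 2 affordable, since individual testing would then exceed the entire budget. Your finish needs the sharper $O(\log_2 t/\log_2\log_2 t)$ count, so that is what the lemma must actually deliver. Second, and more seriously, a direct ``Poisson-type'' tail bound on the number of false edges of a fixed pair is not available, because the candidate edges $\{c,d\}$ overlap in columns: stars around a single vertex inflate all higher moments, so the count need not have Poisson-like upper tails. The repair is exactly the decomposition the paper uses: bound separately (i) the \emph{matching number} among false edges, whose $M$-th moment factors over disjoint column pairs, giving $\Pr[\nu\ge M]\le (e t^2 q/M)^M$ and hence $\nu=O(\log_2 t/\log_2\log_2 t)$ after the union bound over pairs (this forces $N-2\log_2 t\to\infty$ faster than the Stirling corrections, so the $(1+o(1))$ in $N$ must be chosen with care); and (ii) the \emph{maximum degree} $\Delta$, whose completions over $d$ are independent, then use $|E|\le 2\nu\Delta$. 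For the product to be $o(\log_2 t)$ you need $\Delta=O(1)$, not the paper's $\log_2\log_2 t$; at your budget this works only because $2(h(p)-p)>1$ for $p=1-2^{-1/2}$, a numerical fact your sketch never verifies. Finally, with i.i.d.\ Bernoulli entries the per-vertex completion probability is not uniformly small --- when the outcome $\y$ has atypically low weight and some column sits just under $\y$, the expected degree at that column is polynomial in $t$ --- so you would have to union-bound away all such atypical configurations; the paper's constant-weight ensemble makes the completion probability depend on $|\y|$ alone and avoids this entirely. I would switch to constant-weight columns.
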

	\begin{remark}
		It is known~\cite{v2019twostage} that 2 defectives can be found with the optimal number of tests $2\log_2t(1+o(1))$ using only two stages, so, this result is weaker than the result from~\cite{v2019twostage}.
		We present it here only to demonstrate our new approach in the simplest setup. 
	\end{remark}
	\begin{proof}[Proof of Theorem~\ref{th::2def}]
		For the case $s=2$ we deal with a graph instead of hypergraph. 
		Recall that a matching in a graph is a set of non-intersecting edges.
		\begin{definition}
			Call a $N\times t$ matrix $X$ a \textit{2-good} matrix if it satisfies the following properties.
			\begin{enumerate}
				\item For any $\y\in\{0, 1\}^N$ the maximal vertex degree in a graph $H(X, 2, \y)$ is less than $d=\log_2\log_2t$.
				\item For any $\y\in\{0, 1\}^N$, $|\y|=w$, the maximal size of matching in a graph $H(X, 2, \y)$ is less than $10\max(N, t^2q)$, where $q=\frac{\binom{w}{\lfloor pN\rfloor}\binom{\lfloor pN\rfloor}{w-\lfloor pN\rfloor}}{\left(\binom{N}{\lfloor pN\rfloor}\right)^2}$, $p=1-\sqrt{0.5}$.
			\end{enumerate}
		\end{definition}
	Informally, second condition means that for any outcome vector the maximal size of matching can't be a lot bigger than its mathematical expectation, which is equal to $\Theta(t^2 q)$.
		
		\begin{lemma}\label{lem::2def}
			Let $p=1-\sqrt{0.5}$, $d=\log_2\log_2t$,  and
			\begin{equation}
			N=\frac{d+4}{d}\log_2t\frac{1}{h(p)-p}.
			\end{equation}
			Let $X$ be a random $N\times t$ matrix, each column of which is taken independently and uniformly from the set of all columns with $\lfloor pN\rfloor$ ones. Then the probability that the matrix $X$ is 2-good tends to 1 as $t\to \infty$.
		\end{lemma}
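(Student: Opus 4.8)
The plan is to show that each of the two defining properties of a $2$-good matrix fails with probability tending to $0$ and then combine the two estimates by a final union bound. Throughout I write $k=\lfloor pN\rfloor$ and first record the basic edge probability: for a fixed outcome vector $\y$ of weight $w$ and a fixed pair $\{a,b\}$, the event $x(a)\vee x(b)=\y$ forces both supports into $\mathrm{supp}(\y)$ and their union to be all of it, and counting the admissible weight-$k$ columns shows that this event has probability exactly $q=\binom{w}{k}\binom{k}{w-k}/\binom{N}{k}^2$. Since the $t$ columns are drawn independently, edge events attached to vertex-disjoint pairs are independent; this is the feature that makes both of the following estimates tractable.

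For the matching property I would fix $M_0=\lceil 10\max(N,t^2q)\rceil$ and bound the chance that a matching of size $M_0$ exists for a given $\y$. A size-$M_0$ matching uses $2M_0$ distinct columns, so the $M_0$ edge events are independent and a prescribed matching is realised with probability exactly $q^{M_0}$; as the number of size-$M_0$ matchings on $t$ vertices is at most $(et^2/(2M_0))^{M_0}$, the probability of some such matching is at most $(e\,t^2q/(2M_0))^{M_0}$. Because $M_0\ge 10\,t^2q$ the base is at most $e/20<1/6$, and because $M_0\ge 10N$ the whole quantity is at most $(1/6)^{10N}$; a union bound over the at most $2^N$ vectors $\y$ leaves $2^{N}(1/6)^{10N}\to 0$. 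Outcomes for which $10\max(N,t^2q)\ge t/2$ are vacuous, since no matching exceeds $t/2$.

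For the degree property, fix a vertex $i$ and distinct vertices $j_1,\dots,j_d$; having all of them as neighbours of $i$ in a common graph $H(X,2,\y)$ means $x(i)\vee x(j_1)=\cdots=x(i)\vee x(j_d)$. Conditioning on $x(i)$ (support of size $k$) and summing over the common value $\y$ of weight $w=k+m$, this probability equals $\sum_{m=0}^{k}\binom{N-k}{m}\bigl(\binom{k}{m}/\binom{N}{k}\bigr)^{d}$, independently of the support of $x(i)$. I would bound the summands crudely by $\binom{N-k}{m}\le 2^{(1-p)N}$ and $\binom{k}{m}\le 2^{pN}$, use $\binom{N}{k}\ge 2^{Nh(p)}/(N+1)$, and so reduce the sum to $t^{o(1)}\,2^{(1-p)N-Nd(h(p)-p)}$. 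A union bound over $i$ and over the $\binom{t-1}{d}\le t^{d}$ choices of neighbours then yields a failure probability at most $t^{d+1+o(1)}\,2^{(1-p)N}\,2^{-Nd(h(p)-p)}$.

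The decisive point, and the reason for the exact shape of $N$, is the arithmetic of this last exponent: by the choice of $N$ one has $Nd(h(p)-p)=(d+4)\log_2 t$, so $2^{-Nd(h(p)-p)}=t^{-(d+4)}$ absorbs the $t^{d+1}$ of the union bound with three powers of $t$ to spare, while $2^{(1-p)N}=t^{(1-p)(d+4)/(d(h(p)-p))}$ tends to $t^{(1-p)/(h(p)-p)}\approx t^{1.22}$ as $d=\log_2\log_2 t\to\infty$. Hence the whole expression is $t^{-3+1.22+o(1)}\to 0$. Balancing the union-bound loss against the $+4$ slack built into $N$, and checking that the $(1-p)N$ correction stays safely below the gained powers of $t$, is the main obstacle; the matching estimate, by contrast, is robust and holds with considerable room. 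Combining the two bounds shows that both properties hold simultaneously with probability $1-o(1)$, so a random $X$ is $2$-good with probability tending to $1$.
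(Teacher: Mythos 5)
Your proof is correct and follows essentially the same first-moment/union-bound argument as the paper: for the degree property, bound the expected number of $d$-stars using conditional independence of the events $x(i)\vee x(j_l)=\y$ given the center column, and for the matching property, count matchings of size $10\max(N,t^2q)$ and use independence of vertex-disjoint pairs, with the same edge probability $q$ and the same exponent arithmetic driven by $Nd(h(p)-p)=(d+4)\log_2 t$. The only difference is bookkeeping: you sum over the $2^{(1-p)N}$ outcome vectors compatible with the center column rather than union-bounding over all $2^N$ vectors $\y$ as the paper does (the paper instead uses $2^N\le t^2$, valid since $N<2\log_2 t$), a slight tightening that does not change the route.
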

		\begin{proof}
			Estimate the probability that for some $\y\in\{0, 1\}^N$, $|\y|=w$, there exists a vertex $v$ with degree at least $d$ in the graph $H(2, X, \y)$. This probability can be upper bounded by the mathematical expectation of the number of sets of $d$ edges $e_i=(v, v_i)$, $i=1,\ldots d$, which is less than
			\begin{multline*}
			\sum\limits_{w=0}^N \binom{N}{w} t^{d+1}\left(\frac{\binom{\lfloor pN\rfloor}{w - \lfloor pN\rfloor}}{\binom{N}{\lfloor pN\rfloor}}\right)^d
			\leq t^{d+3}\max_{w}\left(\frac{\binom{\lfloor pN\rfloor}{w - \lfloor pN\rfloor}}{\binom{N}{\lfloor pN\rfloor}}\right)^d\\<
			t^{d+3}\left(\frac{2^{pN}}{\binom{N}{\lfloor pN\rfloor}}\right)^d=
			t^{d+3}2^{Nd(p-h(p)+o(1))}\\=t^{-1+o(1)}\to 0, \text{ as } t\to\infty.
			\end{multline*}
			In the first inequality we used the fact that $N<2\log_2t$ for $t$ big enough.
			
			In the similar way estimate the probability that for some $\y\in\{0, 1\}^N$, $|\y|=w$, there exists a matching of size $M=10\max(N, t^2q)$ in the graph $H(2, X, \y)$.  Mathematical expectation of the number of such matchings is upper bounded by
			\begin{multline*}
			\sum\limits_{w=0}^N \binom{N}{w} 
			t^{2M}/M!q^M<
			\sum\limits_{w=0}^N \left(\frac{t^2qe}{M}\right)^M\\
			\le 2^N\left(\frac{e}{10}\right)^M\le \left(\frac{e}{5}\right)^N\to 0 \text{ as } t\to\infty.
			\end{multline*}
		\end{proof}
		
		Use 2-good matrix $X$ as a testing matrix at the first stage. Consider an obtained graph $G=(V, E)=H(X, 2, r(X, S_{un}))$. We want to find a partition of all edges $E$ into $M$ disjoint sets $E=\bigsqcup\limits_{i=1}^M E_{i}$ such that
		\begin{enumerate}
			\item There are no intersecting edges in the same set, i.e. if $e_1\in E_{i}$, $e_2\in E_{j}$ and $e_1\cap e_2\ne \emptyset$, then $i\ne j$.
			\item There are no edges $e_1$ and $e_2$ in the same set, such that there exists an edge $e\in E$, which intersects both $e_1$ and $e_2$, i.e. if $e_1\in E_{i}$, $e_2\in E_{j}$, $e\in E$, $e_1\cap e\ne \emptyset$, $e_2\cap e\ne \emptyset$, then $i\ne j$.
		\end{enumerate}
		Every edge $e$ can't be in the same set with its adjacent edges and also the edges, which have a common adjacent edge with $e$. Since the degree of every vertex is less than $d$, we conclude that the number of such edges is less than $2d^2$. Hence, we can construct such partition greedily for $M=2d^2$.
		
		At the second stage, we carry out $2M$ tests. For each $i\in[M]$ two sets of vertices $S_{2i-1}$ and $S_{2i}$ are tested. The set $S_{2i-1}$ consists of all vertices incident to edges from $E_i$, set $S_{2i}$ is equal to $V\setminus S_{2i-1}$. We claim that the responses to tests $2i-1$ and $2i$ are equal to 1 and 0 respectively if and only if the set of defectives $S_{un}$ coincides with an edge from $E_i$. Indeed, if $S_{un}=e\in E_i$ then outcomes are equals to 1 and 0. Otherwise, $S_{un}$ can intersect at most one edge from $E_i$, therefore, the result of test $2i$ is positive.
		
		So, after the second stage, we will find a set $E_i$, which contains the defective edge. We can treat each edge from this set as a separate sample, only one of which is defective. Therefore, the defective edge can be found at the third stage by binary search using at most $\lceil \log_2|E_i|\rceil\le \log_2|E|+1$ tests. This step finishes the algorithm.
		
		The total number of tests is upper bounded by $N+2M+\log_2|E|+1=N+\log_2|E|+o(\log_2t)$. Let us estimate the cardinality of $E$.
		
		Consider some maximal matching in the graph $G$. Every edge is incident to at least one vertex from this matching, the degree of each vertex is less than $d$, therefore, 
		$$
		|E|\leq 2d\cdot 10\max(N, t^2q).
		$$
		
		Since $N<2\log_2t$ we conclude that $N+\log_2(20dN)<2\log_2t(1+o(1))$. Hence, it is sufficient to show that 
		$$
		N+\log_2(t^2q)\le 2\log_2t(1+o(1)).
		$$
		
		Indeed, 
		\begin{multline*}
		N+\log_2(t^2q)=2\log_2t+N+\log_2q
		\\\leq 2\log_2t+N+N\max_{\omega}(\omega h(p/\omega) + ph((\omega-p)/p)-2h(p)+o(1))
		\end{multline*}
		The expression $\omega h(p/\omega) + ph((\omega-p)/p)-2h(p)$ attains its maximum -1 at $\omega=0.5$. Therefore, the total number of tests is at most
		$$
		2\log_2t+o(\log_2t).
		$$
		Theorem~\ref{th::2def} is proved.

	\end{proof}

	\section{Algorithm for 3 defectives}\label{sec::3Def}
	\begin{theorem}\label{th::3def}
		$$C_5(3)=3.$$
	\end{theorem}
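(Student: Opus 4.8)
The bound $C_5(3)\ge 3$ is exactly the information-theoretic inequality~\eqref{eq::th_inf_bound}, so the whole content is the matching upper bound, which I would obtain by lifting the three-stage scheme of Theorem~\ref{th::2def} from graphs to $3$-uniform hypergraphs. The plan is first to single out the right notion of a \emph{$3$-good} matrix $X$: for every outcome vector $\y$ the hypergraph $H(X,3,\y)$ should have (i) bounded maximal vertex degree, (ii) bounded maximal \emph{pair degree} (the number of hyperedges through a fixed pair of vertices), and (iii) a matching no larger than a $\mathrm{polylog}(t)$ factor times its expectation. Conditions (i) and (iii) are the direct analogues of the two properties in the definition preceding Lemma~\ref{lem::2def}; condition (ii) is genuinely new and is forced by the fact that, unlike edges of a graph, two hyperedges may overlap in a whole pair. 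I would then take $X$ random with each column drawn uniformly among columns of weight $\lfloor pN\rfloor$, with $p$ optimized for $s=3$ and $N=c\,\log_2 t(1+o(1))$, and prove a $3$-good analogue of Lemma~\ref{lem::2def} by the same first-moment estimates: the value of $N$ is pinned down by forcing the expected number of violating degree and pair-degree configurations to tend to $0$.

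With a $3$-good $X$ used at the first stage, the defective set $S_{un}$ is one of the hyperedges of the sparse hypergraph $H=H(X,3,r(X,S_{un}))$, and the remaining four stages must locate it using $o(\log_2 t)$ overhead plus a final binary search. Here the extra stages (five rather than three) are dictated by the richer overlap structure. I would split according to whether $S_{un}$ contains a \emph{heavy} pair, i.e.\ a pair lying in many hyperedges. By sparsity together with condition (ii) there are only few heavy pairs, so a couple of stages suffice to decide whether such a pair is present and, if so, to identify it by group testing among the few candidates, after which the third defective is found by binary search. For triples avoiding heavy pairs every pair sits in few hyperedges, so the incidence hypergraph is locally bounded and one can generalize the partition argument of Theorem~\ref{th::2def}: color the hyperedges into $\mathrm{polylog}(t)$ classes so that the defective hyperedge is separable from the others inside its class, identify the correct class by group tests, and finish with a binary search inside it.

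The test count then reduces, exactly as for $s=2$, to the inequality
\begin{equation*}
N+\log_2|E(H)|\le 3\log_2 t\,(1+o(1)),
\end{equation*}
since the localization stages cost only $o(\log_2 t)$. Conditions (i)--(iii) bound $|E(H)|$ by $\mathrm{polylog}(t)\cdot\max(N,\,t^3 q_3)$, where $q_3$ is the probability that three random weight-$\lfloor pN\rfloor$ columns produce a given outcome vector, so it suffices to prove $N+\log_2(t^3q_3)\le 3\log_2 t(1+o(1))$, i.e.\ $N+\log_2 q_3=o(\log_2 t)$. As in the $s=2$ calculation, $\log_2 q_3$ is bounded by $N$ times the maximum of an entropy expression in the overlap pattern of the three columns, and the optimization together with the value of $N$ coming from the degree conditions should yield exactly the constant $3$.

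I expect the genuine obstacle to be the middle stages rather than the counting. Designing the heavy-pair and light-pair reductions so that both are resolved within a constant number of stages, spending only $o(\log_2 t)$ tests, is where the hypergraph structure really bites: the clean ``no common adjacent edge'' coloring of the graph case must be replaced by a more careful argument that simultaneously controls vertex degrees and pair degrees, and the accounting for the number of heavy pairs is what determines how many reduction steps --- and hence how many stages --- are actually needed. The accompanying entropy optimization giving the constant $3$ is delicate but, being a direct multivariable analogue of the $s=2$ computation, is the routine part.
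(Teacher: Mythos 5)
Your skeleton is in fact the paper's architecture in disguise: your ``heavy pair / light pair'' dichotomy is exactly the paper's split of the hyperedge set into $E_1$ (hyperedges containing an edge of the auxiliary graph $G'$) and $E_2$, and your light case is handled there exactly as you propose, by an $s=2$-style partition plus binary search. The gaps are in the two quantitative claims on which your heavy case rests. First, your condition (ii) --- a uniform $\mathrm{polylog}(t)$ bound on pair degrees for every outcome $\y$ --- cannot hold for any $N\le 3\log_2 t$, i.e.\ for any $N$ within the test budget. Consider a defective triple $\{u,v,z_0\}$ in which $x(u)$ and $x(v)$ are disjoint and $x(z_0)$ lies inside $x(u)\vee x(v)$; the expected number of such triples is polynomial in $t$ and a routine second-moment argument shows they exist w.h.p., so the adversary may choose one. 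For the realized outcome $\y=x(u)\vee x(v)$, every column contained in $\y$ completes $\{u,v\}$ to a hyperedge of $H(X,3,\y)$, and the expected number of such columns is $t\cdot 2^{-N(h(p)-2p)(1+o(1))}\ge t^{\Omega(1)}$, because $h(p)-2p\le 0.322<1/3$ for every $p$. This is precisely why Definition~\ref{def::3good} does \emph{not} bound pair degrees: its property 1 only excludes $(3,1)$ configurations of size $L_1$, while property 3 explicitly allows pair degrees of order $t\Pr_2(1,w_1,w)$, i.e.\ polynomial in $t$.

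Second, and consequently, your reduction of the test count to $N+\log_2|E(H)|\le 3\log_2 t(1+o(1))$ is unjustified, because it assumes the heavy case costs only $o(\log_2 t)$ beyond one binary search. The heavy part $E_1$ can contain $t^{\Omega(1)}$ hyperedges and $t^{\Omega(1)}$ heavy pairs, so ``identify the heavy pair among the few candidates'' is itself a $\Theta(\log_2 t)$-test problem; worse, the $(1,0)$ signature of the $s=2$ partition argument (positive on a class, negative on its complement) is destroyed here, since the third defective lies outside the candidate pairs and contaminates the complement test. The paper's way around this --- additional vertices, the directed graph $G''$ of Lemma~\ref{lem::3defDirGr}, and stages 3--5 --- costs $N+2\log_2|E_1|+o(\log_2 t)$ tests in the heavy case, a factor of $2$, and the heart of the whole proof is then Lemma~\ref{lem::3defNumEdges}: using properties 3 and 4 of Definition~\ref{def::3good} (the latter bounds the \emph{number} of compatible pairs, a quantity absent from your conditions (i)--(iii)) together with a numerical optimization to show $N+2\log_2|E_1|<2.965\log_2 t$, while only the light part satisfies $N+\log_2|E_2|=3\log_2 t(1+o(1))$. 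Your proposal has no counterpart of this asymmetric accounting; without it, or without a genuinely new procedure locating a defective pair among polynomially many overlapping candidates at cost $\log_2|E_1|+o(\log_2 t)$ despite the interference of the third defective, the argument does not close.
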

	We prove Theorem~\ref{th::3def} by presenting a new algorithm for finding 3 defectives, which uses $3\log_2t(1+o(1))$ tests. It is a first multistage algorithm for $s=3$ with the optimal number of tests. The best previously known algorithm~\cite{v2019twostage} used approximately $3.10\log_2t(1+o(1))$ tests. Proofs of Lemmas from this Section are postponed to Appendix.
	\begin{proof}[Proof of Theorem~\ref{th::3def}]
		
		To construct a matrix for the first stage of our algorithm  we must introduce some useful terminology. 
		Fix an integer $L$ and consider a $s$-uniform hypergraph $H$. Call the set of hyperedges $e_1, e_2, \ldots, e_L$ a $(s, k)$ configuration of size $L$ 
		if $e_i\cap e_j=U$, $|U|=k$, for any $i$ and $j$. In other words, $(s, k)$ configuration consists of $L$ hyperedges such that the intersection of every two hyperedges is the same set of size $k$. 
		

		We construct a matrix for the first stage of our algorithm randomly. More precisely, we take a binary matrix $X$ of size $N\times t$, in which each column is taken independently and uniformly from the set of all columns with $\lfloor pN\rfloor$ ones. Let $\Pr_1(s, w)$ be equal to the probability that the union of $s$ columns from such ensemble equals to a fixed vector with $w$ ones. 
		Let $\Pr_2(s, w_1, w)$ be equal to the probability that the union of $s$ columns with a fixed column $\y_1$ of weight $w_1$ equals to a fixed vector $\y$ of weight $w$, $\y\bigvee \y_1=\y$. 
		
		\begin{definition}\label{def::3good}
			Call a $N\times t$ matrix $X$ a 3-good matrix if it satisfies the following list of properties.
			
			\begin{enumerate}
				\item The hypergraph $H(X, 3, \y)$ doesn't have $(3, 1)$ configurations of size $L_1=\log_2\log_2 t$ for any vector $\y\in\{0,1\}^N$.
				\item The hypergraph $H(X, 3, \y)$ doesn't have $(3, 0)$ configurations of size $10\max(t^3\Pr_1(3, w), N)$ for any vector $\y\in\{0,1\}^N$, $w=|\y|$.
				\item Let $\y$ and $\y_1$ be two binary vectors of length $N$, $\y\bigvee \y_1=\y$, $|\y_1|=w_1$, $|\y|=w$. For any such vectors $\y$ and $\y_1$ the number of columns $\z$ in $X$ such that $\y_1\bigvee \z=\y$ is less than $10 B(N, t)$, where $B(N, t)$ is defined as follows
				$$
				B(N, t)=
				\begin{cases}
				t\Pr_2(1, w_1, w), \text{ if }  t\Pr_2(1, w_1, w)>N;\\
				N, \text{ if }  t^{-\frac{1}{\sqrt{L_1}}}\le t\Pr_2(1, w_1, w)\le N;\\
				L_1/10, \text{ if }  t\Pr_2(1, w_1, w)< t^{-\frac{1}{\sqrt{L_1}}}.\\
				\end{cases}$$
				\item Let $\y$ be a binary vector of length $N$, $|\y|=w$, $w_1$ is some integer, $w_1\le w$. Then the number of non-intersecting pairs of columns $\z_1$, $\z_2$ from matrix $X$ such that $\z_1\bigvee \z_2\bigvee \y=\y$,  $|\z_1\bigvee \z_2|=w_1$, is less than $10\max\left(N, \binom{w}{w_1}t^2\Pr_1(2, w_1)\right)$.
			\end{enumerate}
		\end{definition}
	
		The first property of $3$-good matrix ensures that the hypergraph $H(X, 3, \y)$ would be sparse. Other properties guarantees that sizes of some interesting configurations of edges are not very different from their mathematical expectations.
		
		Define $A_1(s, \omega)$ and $A_2(s, \omega_1, \omega)$ as follows
		\begin{equation}
		A_1(s, \omega)=\lim\limits_{N\to\infty}\frac{-\log_2 \Pr_1(s, \lfloor \omega N \rfloor)}{N};
		\end{equation}
		\begin{equation}
		A_2(s, \omega_1, \omega)=\lim\limits_{N\to\infty}\frac{-\log_2 \Pr_2(s, \lfloor\omega_1 N\rfloor, \lfloor \omega N \rfloor)}{N}
		\end{equation}

		\begin{lemma}\label{lem::3Def1st}
			Let $p=1-0.5^{1/3}$, $L_1=\log_2\log_2t$ and
			\begin{equation}
			N=\frac{2L_1+10}{L_1}\log_2t\max_{p\le\omega\le 3p}    \frac{1}{A_2(2, p, \omega)}.
			\end{equation}
			Let $X$ be a random $N\times t$ matrix, each column of which is taken independently and uniformly from the set of all columns with $\lfloor pN\rfloor$ ones. Then the probability that the matrix $X$ is 3-good tends to 1 as $t\to \infty$.
		\end{lemma}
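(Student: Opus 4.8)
The plan is to verify the four properties of Definition~\ref{def::3good} one at a time by the first-moment method, and to finish with a union bound over the (constantly many) properties: if for each property the expected number of ``forbidden'' objects tends to $0$, then by Markov's inequality $X$ violates that property with probability $o(1)$, and hence $X$ is $3$-good with probability $1-o(1)$. The bookkeeping fact used throughout is that $N=\Theta(\log_2 t)$, so that $2^N=t^{\Theta(1)}$; consequently any union over the $\le 2^N$ admissible outcome vectors $\y$ (or over the $\le 4^N$ pairs $(\y,\y_1)$ in Property~3) costs only a fixed power of $t$, to be beaten by the exponentially small probabilities expressed through $\Pr_1$ and $\Pr_2$.

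First I would treat Property~1, which is the condition that dictates the value of $N$. A $(3,1)$ configuration of size $L_1$ is a sunflower with a one-vertex core $v$ and $L_1$ pairwise-disjoint two-element petals; membership in $H(X,3,\y)$ forces every one of its hyperedges to have union exactly $\y$, so a configuration is forbidden precisely when all $L_1$ petals, together with the core column, produce one and the same union. Conditioning on the core column (of weight $pN$), the $L_1$ petal-unions are independent and identically distributed over $\y$, so the probability that they coincide is $\sum_{\y}\Pr_2(2,pN,|\y|)^{L_1}\le\max_{\y}\Pr_2(2,pN,|\y|)^{L_1-1}$, i.e.\ essentially $\Pr_2(2,pN,\omega^\ast N)^{L_1}$ at the worst weight $\omega^\ast$. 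Multiplying by the at most $t^{1+2L_1}$ ways to choose the core and the $2L_1$ petal vertices, and writing $\Pr_2(2,pN,\omega^\ast N)=t^{-NA_2(2,p,\omega^\ast)/\log_2 t}=t^{-(2L_1+10)/L_1}$ by the choice of $N$, the per-petal factor becomes $t^2\Pr_2=t^{-10/L_1}$; the $L_1$ petals thus contribute $t^{-10}$, which overwhelms the $t^{1+o(1)}$ from the core vertex and the $\mathrm{poly}(t)$ from the outcome weight, so the expected number of forbidden configurations is $t^{-\Omega(1)}\to 0$. The maximum over $\omega\in[p,3p]$ in the definition of $N$ is exactly the worst admissible weight of a union of three weight-$pN$ columns, and the slack ``$+10$'' (rather than ``$+1$'') guarantees a strictly negative exponent.

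Next I would dispatch Properties~2 and~4, which are matching/collection counts and follow the matching computation of Lemma~\ref{lem::2def}. For Property~2 a $(3,0)$ configuration of size $M=10\max(t^3\Pr_1(3,w),N)$ is a family of $M$ pairwise-disjoint hyperedges all equal to $\y$; disjointness makes the $M$ unions independent, so the expected number is at most $\sum_{w}\binom{N}{w}\frac{t^{3M}}{M!}\Pr_1(3,w)^M\le\sum_{w}\binom{N}{w}\left(\frac{et^3\Pr_1(3,w)}{M}\right)^M$, and $M\ge 10\,t^3\Pr_1(3,w)$ turns each summand into $(e/10)^M$, while $M\ge 10N$ makes $2^N(e/10)^{10N}\to 0$. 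Property~4 is the same device applied to (pairwise column-disjoint) pairs $(\z_1,\z_2)$ with $\z_1\bigvee\z_2\bigvee\y=\y$ and $|\z_1\bigvee\z_2|=w_1$, whose single-pair probability is captured by $\Pr_1(2,w_1)$ and whose mean count is $\binom{w}{w_1}t^2\Pr_1(2,w_1)$; taking $M=10\max(N,\binom{w}{w_1}t^2\Pr_1(2,w_1))$ again yields a $(e/10)^M$ bound, so the pair-count exceeds its threshold with probability $o(1)$.

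The hard part will be Property~3, whose bound $10B(N,t)$ is piecewise and must be checked across three regimes of the mean $\mu:=t\,\Pr_2(1,w_1,w)$. Here the count of columns $\z$ with $\y_1\bigvee\z=\y$ is a sum of $t$ \emph{independent} indicators of mean $\mu$, so the upper tail obeys $\Pr[\mathrm{count}\ge m]\le(e\mu/m)^m$. In the regime $\mu>N$ I would take $m=10\mu$ to get $(e/10)^{10\mu}\le(e/10)^{10N}$; in the regime $t^{-1/\sqrt{L_1}}\le\mu\le N$ take $m=10N$, so that $(e\mu/m)^m\le(e/10)^{10N}$; and in the sparse regime $\mu<t^{-1/\sqrt{L_1}}$ take $m=L_1/10$, giving $(e\mu/m)^m\le t^{-\sqrt{L_1}/10+o(1)}$. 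The delicacy is entirely in this last, weakest estimate: the threshold $t^{-1/\sqrt{L_1}}$ together with the logarithmic cut-off $L_1/10$ is chosen precisely so that $t^{-\sqrt{L_1}/10}$, with $L_1=\log_2\log_2 t\to\infty$, still decays faster than the union factor $4^N=t^{O(1)}$ over all pairs $(\y,\y_1)$ grows; in the first two regimes the bound $(e/10)^{10N}=t^{-\Omega(1)}$ beats $4^N$ for the same reason as in Property~2. Once each per-pair failure probability is seen to be $o(4^{-N})$, the union over pairs tends to $0$, and a final union over the four properties completes the proof.
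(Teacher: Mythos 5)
Your proposal is correct and follows essentially the same route as the paper's own proof: both verify the four properties of Definition~2 via first-moment/binomial-tail bounds of the form $\left(e\mu/m\right)^m$, use the choice of $N$ to force the expected number of $(3,1)$ configurations down to a negative power of $t$, split Property~3 into the same three regimes of $t\Pr_2(1,w_1,w)$ with the logarithmic cut-off handling the sparse regime, and close with union bounds over outcome vectors ($2^N$, $4^N$) and over the properties. The only cosmetic deviations (handling the sum over $\y$ in Property~1 via $\sum_{\y}\Pr_2^{L_1}\le\max_{\y}\Pr_2^{L_1-1}$ rather than a $2^N$ factor, and obtaining decay $t^{-\sqrt{L_1}/10}$ instead of $t^{-\sqrt{L_1}}$ in the sparse regime) do not alter the argument.
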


		
		We use a 3-good matrix $X$ as a testing matrix at the first stage of our algorithm. Consider an obtained hypergraph $H=H(X, 3, r(X, S_{un}))=([t], E)$. Introduce a new graph $G'=([t], E')$. The set of vertices coincides with the set of samples. Two vertices $v_1$ and $v_2$ are connected with an edge if there exists at least $L_2=3L_1$ hyperedges $e\in E$ from the hypergraph $H$, such that $v_1\in e$ and $v_2\in e$.
		
		\begin{lemma}\label{lem::3defVerDeg}
			The degree of every vertex in the graph $G'$ is less than $L_1$.
		\end{lemma}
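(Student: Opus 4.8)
The plan is to argue by contradiction, exploiting the first property of a 3-good matrix from Definition~\ref{def::3good}. Suppose some vertex $v$ had degree at least $L_1$ in the graph $G'$. I will then manufacture a $(3,1)$ configuration of size $L_1$ in the hypergraph $H=H(X,3,r(X,S_{un}))$ whose common pairwise intersection is exactly $\{v\}$, which is forbidden for a 3-good matrix, yielding the contradiction.

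Unpacking the hypothesis, if $\deg_{G'}(v)\ge L_1$ then there are $L_1$ distinct neighbors $u_1,\dots,u_{L_1}$ of $v$, and by the definition of $G'$ each pair $\{v,u_i\}$ lies in at least $L_2=3L_1$ hyperedges of $H$. Since every hyperedge of $H$ has exactly three vertices, each hyperedge containing $\{v,u_i\}$ has the form $\{v,u_i,x\}$ for a third vertex $x$; hence for every index $i$ there are at least $3L_1$ admissible choices for this third vertex.

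The core of the argument is a greedy selection of one hyperedge $e_i=\{v,u_i,x_i\}$ per neighbor so that $e_i\cap e_j=\{v\}$ for all $i\ne j$. I would process the neighbors in order: when choosing $x_i$, I must avoid the other neighbors $u_1,\dots,u_{L_1}$ (at most $L_1-1$ forbidden values, as the hyperedge already excludes $u_i$ and $v$) and the previously chosen third vertices $x_1,\dots,x_{i-1}$ (at most $L_1-1$ values). The total number of forbidden vertices is thus at most $2L_1-2$, strictly less than the $3L_1$ available third vertices, so a valid $x_i$ always exists. This reveals precisely why the threshold $L_2=3L_1$ is chosen in the construction of $G'$: the factor $3$ leaves enough room to keep all $2L_1$ relevant vertices—the $L_1$ neighbors together with the $L_1$ selected third vertices—pairwise distinct. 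The resulting edges $e_1,\dots,e_{L_1}$ then form a $(3,1)$ configuration of size $L_1$ with common vertex $\{v\}$, contradicting Property~1 of Definition~\ref{def::3good}.

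I expect the only real work to be the bookkeeping in the greedy step rather than anything conceptual: one must verify that each selection avoids exactly the vertices whose reuse would enlarge some intersection $e_i\cap e_j$ beyond $\{v\}$, and confirm that all $u_i$ and all $x_i$ end up distinct. Once that is checked, the $(3,1)$ configuration violates the sparsity guarantee of the 3-good matrix, so $\deg_{G'}(v)<L_1$ for every vertex $v$, which is the claim of the lemma.
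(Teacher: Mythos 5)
Your proof is correct and follows essentially the same route as the paper's: assume $\deg_{G'}(v)\ge L_1$, greedily pick one hyperedge per neighbor while avoiding all neighbors and previously chosen third vertices, and derive a $(3,1)$ configuration of size $L_1$ contradicting Property~1 of Definition~\ref{def::3good}. Your bookkeeping (at most $2L_1-2$ forbidden choices against $L_2=3L_1$ candidates) matches the paper's count of at most $L_1+(i-1)<L_2$ prohibited candidates, and you correctly handle the subtle point that third vertices must also avoid neighbors not yet processed.
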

		
		Divide all hyperedges $E$ of the hypergraph $H$ into two groups $E_1$ and $E_2$, $E=E_1\bigsqcup E_2$. We put a hyperedge into $E_1$ if it contains an edge from $G'$ as a subset; otherwise, we put a hyperedge into $E_2$. Note that the hypergraph $H_2=([t], E_2)$ can't contain a $(3, 2)$ configuration of size $L_2$. The following lemma shows that the hypergraph $H_2$ is quite sparse.
		\begin{lemma}\label{lem::3defH2}
			The following two claims hold.
			\begin{enumerate}
				\item The degree of each vertex in $H_2=([t], E_2)$, i.e. the number of hyperedges containing one vertex, is at most $2L_1L_2$.
				\item The number of hyperedges in $E_2$ is less than the size of the biggest $(3, 0)$ configuration multiplied by $6L_1L_2$.
			\end{enumerate}
			
		\end{lemma}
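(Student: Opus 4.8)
The plan is to prove the two claims about the sparsity of $H_2=([t], E_2)$ by exploiting the defining property that $H_2$ contains no $(3,2)$ configuration of size $L_2$, together with the sparsity guarantee coming from the definition of $E_2$ (every hyperedge in $E_2$ fails to contain any edge of $G'$).

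\emph{First claim.} I would fix a vertex $v$ and count the hyperedges of $E_2$ containing $v$. Each such hyperedge is a triple $\{v, a, b\}$. The key observation is that the pair $\{v,a\}$ cannot be an edge of $G'$, since otherwise the hyperedge would contain a $G'$-edge and would have been placed into $E_1$; likewise $\{v,b\}$ is not a $G'$-edge. By the definition of $G'$, a pair $\{v,a\}$ that is \emph{not} an edge is contained in fewer than $L_2=3L_1$ hyperedges of $H$. Now I would set up the count as follows: group the $E_2$-hyperedges through $v$ by the \emph{other two} vertices, and bound the number of distinct second coordinates $a$ and, for each, the number of completions $b$. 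Concretely, consider the auxiliary graph on neighbors of $v$ whose edges are the pairs $\{a,b\}$ with $\{v,a,b\}\in E_2$. Because no pair $\{v,a\}$ lies in $\ge L_2$ hyperedges, each vertex $a$ in this auxiliary graph has degree less than $L_2$; and because $v$ has $G'$-degree less than $L_1$ (Lemma~\ref{lem::3defVerDeg}), the number of vertices $a$ that \emph{can} appear — i.e., those joined to $v$ in $\ge L_2$ hyperedges are forbidden — still must be bounded. I would argue that any vertex $a$ appearing in $\ge L_2$ of these triples would make $\{v,a\}$ a $G'$-edge, so each admissible $a$ appears in $<L_2$ triples, and the number of admissible $a$'s that appear is at most $2L_1$ (each triple through $v$ uses two of them, and a degree argument via $G'$ controls repetition). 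Multiplying gives the bound $2L_1 L_2$.

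\emph{Second claim.} I would relate $|E_2|$ to the largest $(3,0)$ configuration. Take a maximal $(3,0)$ configuration in $H_2$, i.e.\ a maximal family of pairwise-disjoint hyperedges (a maximal matching in the hypergraph sense, since a $(3,0)$ configuration is a set of mutually disjoint triples); call its size $m$, so it covers at most $3m$ vertices. By maximality, every hyperedge of $E_2$ shares a vertex with this configuration, hence meets the covered vertex set of size $\le 3m$. Using the first claim, each of those $\le 3m$ vertices lies in at most $2L_1L_2$ hyperedges of $E_2$, so $|E_2| \le 3m\cdot 2L_1L_2 = 6L_1L_2\, m$, which is exactly the claimed bound.

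The main obstacle I anticipate is the bookkeeping in the first claim: I must be careful that ``$\{v,a\}$ not a $G'$-edge'' is used correctly to bound \emph{both} how many distinct partners $a$ occur and how many triples share a fixed $a$, without double counting, and that the interaction with the $G'$-degree bound from Lemma~\ref{lem::3defVerDeg} yields the clean factor $2L_1$ rather than something larger. The second claim is then a routine maximal-configuration covering argument that feeds on the first.
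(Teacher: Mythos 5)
Your proof of the second claim is correct and is exactly the paper's argument: a maximal $(3,0)$ configuration of size $m$ covers at most $3m$ vertices, by maximality every hyperedge of $E_2$ meets that vertex set, and the first claim then gives $|E_2|\le 3m\cdot 2L_1L_2=6L_1L_2\,m$.

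The first claim, however, has a genuine gap, precisely at the spot you flagged: the bound of $2L_1$ on the number of distinct partners $a$. You try to extract it from Lemma~\ref{lem::3defVerDeg}, i.e.\ from the $G'$-degree of $v$, but that lemma bounds the number of \emph{forbidden} partners (vertices $u$ such that $\{v,u\}$ lies in at least $L_2$ hyperedges of $H$); it says nothing about how many \emph{admissible} partners can occur. A priori, $v$ could lie in arbitrarily many $E_2$-triples that are pairwise disjoint apart from $v$: then each pair $\{v,a\}$ lies in exactly one hyperedge, no pair is a $G'$-edge, all these triples land in $E_2$, and your argument yields no bound whatsoever on the degree of $v$ in $H_2$. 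What rules out this configuration is not any property of $G'$ but property~1 of Definition~\ref{def::3good} --- the hypergraph $H(X,3,\y)$ contains no $(3,1)$ configuration of size $L_1$ --- and this property is never invoked in your proof. The paper uses it as follows: take a maximal $(3,1)$ configuration among the $E_2$-hyperedges through $v$ (in your auxiliary-graph language, a maximal matching); its size $M_1$ is less than $L_1$ by property~1; by maximality, every $E_2$-hyperedge through $v$ contains one of the $2M_1<2L_1$ pairs $\{v,x\}$ with $x$ a non-$v$ vertex of the configuration; and each such pair lies in fewer than $L_2$ hyperedges of $E_2$, since otherwise it would give a $(3,2)$ configuration of size $L_2$, i.e.\ a $G'$-edge, contradicting the definition of $E_2$. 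Multiplying gives the bound $2L_1L_2$. Your auxiliary graph can be salvaged the same way: it has maximum degree less than $L_2$ (your correct observation) and maximum matching size less than $L_1$ (from property~1), and a graph with these two parameters has fewer than $2L_1\cdot L_2$ edges because every edge meets a matched vertex. But without invoking property~1, the claim is not provable from the facts you use.
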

		
		
		For every hyperedge $e=(v_1, v_2, v_3)\in E_1$ we choose one vertex $v_i$ such that $e\setminus v_i\in E'$. Call that vertex an additional vertex of the hyperedge $e$. If there are multiple ways to choose such vertex, we do it arbitrarily.
		
		Introduce a new directed graph $G''=([t], E'')$. For every hyperedge $e\in E_1$, $e=(v_1, v_2, v_3)$ with an additional vertex $v_1$ we add to $E''$ 4 arcs $(v_1, v_2)$, $(v_1, v_3)$, $(v_2, v_3)$, $(v_3, v_2)$. If an arc has already been in $E''$, we don't add it second time, i.e. there is no multi-edges in $G''$.
		\begin{lemma}\label{lem::3defDirGr}
			The out-degree in the graph $G''$ is less than $3L_1^2$. 
		\end{lemma}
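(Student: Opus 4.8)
The plan is to fix a vertex $v$ and bound the number of out-arcs $(v,\cdot)\in E''$ by splitting them according to the role $v$ plays in the hyperedge that generated them. Recall that a hyperedge $e=(v_1,v_2,v_3)\in E_1$ with additional vertex $v_1$ contributes the arcs $(v_1,v_2)$, $(v_1,v_3)$, $(v_2,v_3)$, $(v_3,v_2)$, and that by definition $\{v_2,v_3\}=e\setminus v_1\in E'$. Hence an out-arc can leave $v$ in only two ways: either $v$ is the additional vertex of the generating hyperedge, which I will call \emph{type A}, or $v$ is one of the two non-additional vertices and the arc points to the other non-additional vertex, which I will call \emph{type B}. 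I would bound the two contributions separately.

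Type B is immediate: if $(v,w)$ is of type B then $\{v,w\}\in E'$, so $w$ is a neighbor of $v$ in $G'$, and by Lemma~\ref{lem::3defVerDeg} there are fewer than $L_1$ such $w$. The main work is the type A bound, and here the key step is to pass to the \emph{link} of $v$, namely the graph $L_v$ on $[t]\setminus\{v\}$ whose edges are exactly the pairs $\{w,w'\}$ with $\{v,w,w'\}\in E$. The crux of the argument is the observation that a matching of size $L_1$ in $L_v$ is precisely a set of $L_1$ hyperedges through $v$ that pairwise intersect only in $v$, i.e.\ a $(3,1)$ configuration of size $L_1$, which is forbidden by the first property of a $3$-good matrix. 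Consequently $L_v$ has no matching of size $L_1$, so the endpoints of any maximal matching form a vertex cover $C$ of $L_v$ with $|C|<2L_1$.

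To finish, note that every type A target $w$ is an endpoint of a ``good'' edge $\{w,w'\}$ of $L_v$ that additionally lies in $E'$ (this is what allows $v$ to be chosen as the additional vertex). Since $C$ covers $L_v$, each good edge has an endpoint in $C$, and its other endpoint is then a $G'$-neighbor of that cover vertex; bounding every $G'$-degree by $L_1$ via Lemma~\ref{lem::3defVerDeg}, the number of distinct type A targets is at most $|C|(1+L_1)<2L_1(1+L_1)$. Adding the type B contribution gives an out-degree bounded by $2L_1(1+L_1)+L_1=2L_1^2+3L_1$, which is smaller than $3L_1^2$ as soon as $L_1>3$; since $L_1=\log_2\log_2 t\to\infty$ this holds for all sufficiently large $t$. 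I expect the only real obstacle to be the type A estimate, and specifically the recognition that the forbidden $(3,1)$ configuration is nothing but a matching in the link $L_v$, which is what converts property~1 of Definition~\ref{def::3good} into a vertex cover of size $O(L_1)$ and lets the $G'$-degree bound of Lemma~\ref{lem::3defVerDeg} close the count.
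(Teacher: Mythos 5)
Your proof is correct and takes essentially the same route as the paper: the same split of out-arcs according to whether $v$ is the additional vertex of the generating hyperedge, the same use of Lemma~\ref{lem::3defVerDeg} for the non-additional case, and, for the additional case, your maximal matching in the link of $v$ together with its vertex cover is precisely the paper's maximal $(3,1)$ configuration through $v$ together with the pairs $(v,x)$, with Lemma~\ref{lem::3defVerDeg} again bounding how many good edges each cover vertex can absorb. The only difference is bookkeeping (you count distinct targets and get $2L_1^2+3L_1<3L_1^2$ for $L_1>3$, i.e.\ for $t$ large, while the paper counts hyperedges in which $v$ is additional), so the two arguments coincide in substance.
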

		At the second stage we want to check whether the set of defectives lies in $E_1$ or $E_2$. 
		
		\begin{lemma}\label{lem::3defPart}
			There exists a partition of $E_2$ into $M\le 96L_1^2L_2^2$ disjoint sets $E_2=\bigsqcup\limits_{i=1}^M E_{2,i}$ such that
			\begin{enumerate}
				\item There are no intersecting hyperedges in the same set, i.e. if $e_1\in E_{2,i}$, $e_2\in E_{2,j}$ and $e_1\cap e_2\ne \emptyset$, then $i\ne j$.
				\item There are no hyperedges $e_1$ and $e_2$ in the same set, such that there exists a hyperedge $e\in E$, which intersects both $e_1$ and $e_2$, i.e. if $e_1\in E_{2,i}$, $e_2\in E_{2,j}$, $e\in E$, $e_1\cap e\ne \emptyset$, $e_2\cap e\ne \emptyset$, then $i\ne j$.
			\end{enumerate}
		\end{lemma}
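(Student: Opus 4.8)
The plan is to recast both requirements as a single graph-colouring problem. I would introduce a \emph{conflict graph} $G_c$ whose vertices are the hyperedges of $E_2$, joining $e_1,e_2\in E_2$ exactly when some $e\in E$ meets both of them. Since two intersecting hyperedges $e_1,e_2$ are already witnessed by $e=e_1\in E_2\subseteq E$, the first condition is subsumed by the second, and any proper colouring of $G_c$ with $M$ colours gives a partition of $E_2$ into $M$ classes satisfying both conditions. Hence it suffices to colour $G_c$ with at most $96L_1^2L_2^2$ colours, which I would do greedily along a suitable ordering.

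To control $G_c$ I would split its edges according to a witness. Call a conflict of \emph{type II} if it admits a witness $e\in E_2$ (this includes every intersecting pair, taking $e=e_1$), and of \emph{type I} otherwise, so that a type-I conflict has all its witnesses in $E_1$. Type II is easy: by the first part of Lemma~\ref{lem::3defH2} every vertex lies in at most $2L_1L_2$ hyperedges of $E_2$, so a fixed $e_1$ meets at most $3\cdot2L_1L_2$ hyperedges $e\in E_2$, and each such $e$ meets at most $3\cdot2L_1L_2$ hyperedges $e_2$; thus every vertex of $G_c$ has type-II degree at most $36L_1^2L_2^2$.

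The real obstacle is type I. A vertex $a$ of $e_1$ may belong to arbitrarily many hyperedges of $E_1$, namely when $a$ is the non-additional endpoint of a $G'$-edge $\{a,c\}$ contained in many hyperedges, so the plain type-I degree of $e_1$ can be unbounded and a max-degree argument is hopeless. I would resolve this using the orientation encoded by $G''$. The key observation is that \emph{every} pair of vertices lying in a common hyperedge of $E_1$ is joined by an arc of $G''$ in at least one direction: the additional vertex sends arcs to the two non-additional vertices, which are themselves joined in both directions. Consequently, for a type-I conflict with witness $e\in E_1$ and $a\in e_1\cap e$, $b\in e_2\cap e$ (here $a\ne b$, since otherwise $e_1\cap e_2\ne\emptyset$ and the conflict would be type II), there is an arc $a\to b$ or $b\to a$ in $G''$. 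I would orient the conflict edge along that arc, say $e_1\to e_2$ when $a\to b$. Since Lemma~\ref{lem::3defDirGr} bounds the out-degree in $G''$ by $3L_1^2$ and each endpoint $b$ lies in at most $2L_1L_2$ hyperedges of $E_2$, each $e_1$ emits at most $3\cdot3L_1^2\cdot2L_1L_2=18L_1^3L_2=6L_1^2L_2^2$ type-I arcs (using $L_2=3L_1$).

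Finally I would combine the two parts: orienting the type-II edges arbitrarily and the type-I edges as above yields an orientation of $G_c$ of out-degree at most $36L_1^2L_2^2+6L_1^2L_2^2=42L_1^2L_2^2$. An orientation of out-degree $k$ makes a graph $2k$-degenerate, so greedy colouring in a degeneracy order uses at most $2\cdot42L_1^2L_2^2+1\le96L_1^2L_2^2$ colours, which is exactly the claimed bound on $M$. The entire difficulty is concentrated in the type-I step; the built-in asymmetry of $G''$ (arcs always point away from the additional vertex) is precisely what turns an \emph{a priori} unbounded-degree conflict relation into one with bounded out-degree, and hence bounded degeneracy.
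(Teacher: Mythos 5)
Your proof is correct and follows essentially the same route as the paper: the paper also builds a directed conflict graph on $E_2$ (its $G'''$), uses Lemmas~\ref{lem::3defH2} and~\ref{lem::3defDirGr} to bound its out-degree (by $48L_1^2L_2^2$, versus your $42L_1^2L_2^2$), and then partitions greedily into at most $96L_1^2L_2^2$ classes. Your type-I/type-II split and the orientation of $E_1$-witnessed conflicts along arcs of $G''$ is exactly the mechanism behind the paper's third arc-creation rule, so the two arguments differ only in bookkeeping.
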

		\begin{remark}
			We emphasize that in the second condition a hyperedge $e$ is not necessarily from $E_2$, it can be from $E_1$ as well.
		\end{remark}

		Construct a partition from Lemma~\ref{lem::3defPart}. The second stage consists of $2M$ tests. For each $i\in [M]$ two sets of vertices $S_{2i-1}$ and $S_{2i}$ are tested. In the first tested set $S_{2i-1}$ we include all vertices $v$, which belong to some hyperedge $e\in E_{2,i}$. In the second test all other vertices are included, $S_{2i}=[t]\setminus S_{2i-1}$.
		
		If the unknown set of defectives coincides with some hyperedge $e\in E_{2,i}$, then the outcomes of tests $2i-1$ and $2i$ are $1$ and $0$ respectively. Otherwise, the outcomes are different. The first claim is obvious. To prove the second claim note that $S_{un}$ can't intersect two hyperedges from $E_{2,i}$ by Lemma~\ref{lem::3defPart}; therefore, it can't be a subset of $S_{2i-1}$, which means that the outcomes can't be equal to 1 and 0 respectively.
		
		So, we have 2 cases.
		\begin{enumerate}
			\item There is an integer $i$ such that the outcomes for tests $S_{2i-1}$ and $S_{2i}$ are equal to $1$ and $0$ respectively.
			
			In that case $S_{un}=e\in E_{2,i}$. Then we can think about each hyperedge from $E_{2,i}$ as a separate sample. This set of samples contains exactly one defective element $e$, which can be found by using a binary search algorithm.
			
			To sum up, in this case we have used 3 stages and $N+2M+\lceil\log_2|E_{2,i}|\rceil\le N+o(\log t)+\log_2|E_2|$ tests.
			
			\item There is no integer $i$ such that the outcomes for tests $S_{2i-1}$ and $S_{2i}$ are equal to $1$ and $0$ respectively.
			
			It means that $S_{un}$ coincides with some hyperedge in $E_1$. Recall the graph $G''$. Let $V_0$ be a set of all isolated vertices in $G''$. By Lemma~\ref{lem::3defDirGr} the out-degree of every vertex in this graph is less than $3L_1^2$, therefore, it is possible to partition the set of all non-isolated vertices into $q\leq 6L_1^2$ disjoint sets $V_i$, $V\setminus V_0=\bigsqcup\limits_{i=1}^q V_i$ such that there is no arc $e\in E''$ inside one set $V_i$. There is an arc in at least one direction between any two vertices from the edge $S_{un}=e=(v_1, v_2, v_3)\in E_1$, hence, 3 vertices $v_1$, $v_2$, $v_3$ will be placed in 3 different sets.
			
			At the third stage, we test each set $V_i$ separately. We will obtain exactly 3 positive outcomes at this stage. Without loss of generality assume that the tested set $V_1$ has given a positive result.
			This set contains exactly 1 defective element. At the fourth stage find this vertex using $\lceil\log_2|V_1|\rceil\le \lceil\log_2|V\setminus V_0|\rceil\le \log_2|E_1|+3$ tests by binary search. Denote this vertex as $v$.
			
			Vertex $v$ is an additional vertex for hyperedges $e_1, \ldots, e_{M_1}\in E_1$. By Lemma~\ref{lem::3defDirGr} $M_1<3L_1^2$. Also, vertex $v$ belongs to hyperedges $e'_1, \ldots, e'_{M_2}$, $e'_i=(v, v'_i, u_i)$, $u_i$ is an additional vertex for the edge $e'_i$. Define sets of vertices $W=\cup_{i=1}^{M_1}e_i\setminus \{v\}$, $V'=\cup_{i=1}^{M_2}{v'_i}$, $U=\cup_{i=1}^{M_2}{v'_i}\setminus V'$. At the fifth stage we perform $|W|+|V'|+\lceil\log_2|U|\rceil$ tests. Each element of $W$ and $V'$ is tested separately; binary search is performed on $U$ to find one defective element. If the vertex $v$ is additional in the hyperedge $S_{un}$, then two others defectives will be found in $W$. Otherwise, at least one defective elements would be found in $V'$. If there is exactly one defective in $V'$, the last one will be found in $U$. This stage completes the algorithm.
			
			To sum up, in this case we have used 5 stages and at most $N+2M+q+\log_2|E_1|+3+|W|+|V'|+\log_2|U|+1$. Recall that $M<96L_1^2L_2^2$, $q<6L_1^2$; $|W|<6L_1^2$ by Lemma~\ref{lem::3defDirGr}, $|V'|<L_1$ by Lemma~\ref{lem::3defVerDeg}, $|U|\le 3|E_1|$; therefore, the total number of tests is upper bounded by $N+2\log_2|E_1|+o(\log t)$.

		\end{enumerate}
		
		The following Lemma finishes the proof of Theorem~\ref{th::3def}.
		\begin{lemma}\label{lem::3defNumEdges}
			\begin{equation}\label{eq::E2}
			N+\log_2|E_2|=3\log_2t(1+o(1));
			\end{equation}
			\begin{equation}\label{eq::E1}
			N+2\log_2|E_1|<3\log_2t(1+o(1)).
			\end{equation}
			
		\end{lemma}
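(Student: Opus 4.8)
The plan is to estimate $|E_2|$ and $|E_1|$ separately, feeding the sparsity statements of Lemma~\ref{lem::3defH2} and the four defining inequalities of Definition~\ref{def::3good} into the exponents $A_1$ and $A_2$. Throughout write $w=|\y|$ and $\omega=w/N$; since $\y$ is the disjunctive union of three codewords of weight $\lfloor pN\rfloor$, only weights $\omega\in[p,3p]$ can occur.

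For~\eqref{eq::E2} I would first apply the second claim of Lemma~\ref{lem::3defH2}: $|E_2|$ is at most $6L_1L_2$ times the largest $(3,0)$-configuration, which the second property of Definition~\ref{def::3good} bounds by $10\max(t^3\Pr_1(3,w),N)$. Taking logarithms, the branch $t^3\Pr_1(3,w)$ contributes $3\log_2 t-NA_1(3,\omega)(1+o(1))$, so that $N+\log_2|E_2|\le 3\log_2 t+N\bigl(1-A_1(3,\omega)\bigr)+o(\log_2 t)$. The crux of this part is the inequality $A_1(3,\omega)\ge 1$ on $[p,3p]$ with equality exactly at $\omega=1/2$; this is the three-codeword analogue of the bound $\omega h(p/\omega)+ph((\omega-p)/p)-2h(p)\le-1$ from Theorem~\ref{th::2def}, and it holds because $p=1-0.5^{1/3}$ makes the typical union weight equal to $N/2$, the unique minimiser of $A_1(3,\cdot)$, where the value is $h(1/2)=1$. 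For the branch in which the maximum equals $N$ it remains to verify $N\le 3\log_2 t(1+o(1))$, i.e. $\min_\omega A_2(2,p,\omega)\ge 2/3$, from the explicit form of $A_2$. Combining the two branches yields~\eqref{eq::E2}, the equality coming from the tight case $\omega=1/2$.

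For~\eqref{eq::E1} I would organise $E_1$ by the union weight $w_1$ of the $G'$-edge each of its hyperedges contains. The fourth property of Definition~\ref{def::3good} bounds the number of $G'$-edges of weight $w_1$ by $10\max(N,\binom{w}{w_1}t^2\Pr_1(2,w_1))$, and the third property bounds the number of third vertices completing a fixed such edge by $10B(N,t)$. The decisive structural observation is that a pair can be a $G'$-edge only if it lies in at least $L_2=3L_1$ hyperedges; since in the regime $t\Pr_2(1,w_1,w)<t^{-1/\sqrt{L_1}}$ property~3 gives fewer than $L_1<L_2$ completions, the sum may be restricted to $R=\{w_1:\ t\Pr_2(1,w_1,w)\ge t^{-1/\sqrt{L_1}}\}$. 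This produces
\[
\log_2|E_1|\le 3\log_2 t+N\max_{\omega_1\in R}\bigl(\omega h(\omega_1/\omega)-A_1(2,\omega_1)-A_2(1,\omega_1,\omega)\bigr)+o(\log_2 t),
\]
where $R$ now reads $A_2(1,\omega_1,\omega)\le(1+o(1))\log_2 t/N$.

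The hard part will be showing that this constrained maximum is small enough. Using the summand identity $\Pr_1(3,w)=\sum_{w_1}\binom{w}{w_1}\Pr_1(2,w_1)\Pr_2(1,w_1,w)$, the unconstrained maximum equals $-A_1(3,\omega)=-1$, which merely reproduces the $E_2$ bound and is far too weak for~\eqref{eq::E1}; hence everything rests on the constraint $R$ being active. I would argue that the typical pair completing a weight-$N/2$ union has $A_2(1,\omega_1^\ast,\omega)$ strictly exceeding the threshold $\log_2 t/N$, so the optimiser is excluded and the constrained maximum is attained on the boundary $A_2(1,\omega_1,\omega)=\log_2 t/N$, strictly below $-1$. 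Writing $N=2k\log_2 t(1+o(1))$ with $k=\max_\omega 1/A_2(2,p,\omega)$, the target $N+2\log_2|E_1|<3\log_2 t(1+o(1))$ reduces to showing this boundary value lies below $-(3/2+k)/(2k)$ uniformly in $\omega\in[p,3p]$. I expect this final optimisation to be the main obstacle: it should be handled through the chain-rule identity $A_2(2,p,\omega)=\min_{\omega_1}\bigl(A_2(1,p,\omega_1)+A_2(1,\omega_1,\omega)-(\omega-p)h((\omega_1-p)/(\omega-p))\bigr)$, which is precisely the relation that calibrates $k$ so that the boundary value clears the required margin.
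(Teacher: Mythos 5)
Your proposal follows the paper's own route for both halves: \eqref{eq::E2} comes from Lemma~\ref{lem::3defH2} together with property~2 of Definition~\ref{def::3good} and the uniform bound $A_1(3,\omega)\ge 1$, and \eqref{eq::E1} comes from properties~3 and~4 together with the observation that weights with $t\Pr_2(1,w_1,w)<t^{-1/\sqrt{L_1}}$ contribute nothing to $E_1$, since property~3 then allows fewer than $L_1<L_2$ completions. Your two identities (the decomposition $\Pr_1(3,w)=\sum_{w_1}\binom{w}{w_1}\Pr_1(2,w_1)\Pr_2(1,w_1,w)$ and the chain rule for $A_2(2,p,\omega)$) are correct and do not appear in the paper, but you use them only to explain why the constraint must be active, not to close the argument. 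Two genuine gaps remain, and they sit exactly where the difficulty is. First, your displayed bound on $\log_2|E_1|$ replaces $10\max\bigl(N,\binom{w}{w_1}t^2\Pr_1(2,w_1)\bigr)$ and $10B(N,t)$ by their ``expectation'' branches: you add the exponents $r_1=2\log_2 t+N\bigl(\omega h(\omega_1/\omega)-A_1(2,\omega_1)\bigr)$ and $r_2=\log_2 t-NA_2(1,\omega_1,\omega)$, whereas the properties only yield $\max(0,r_1)+\max(0,r_2)$ up to $o(\log_2 t)$. The discrepancy is not cosmetic. Write $k=\max_{p\le\omega\le 3p}1/A_2(2,p,\omega)\approx 1.35$, so $N=2k\log_2 t(1+o(1))$ by Lemma~\ref{lem::3Def1st}, and take $\omega_1=\omega=2p$: then $A_2(1,\omega_1,\omega)=h(p)-2p\approx 0.32<1/(2k)\approx 0.37$, so this point lies in your region $R$, while $r_1\approx-0.85\log_2 t<0$ and $r_2\approx 0.13\log_2 t$. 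The bound that actually follows from the properties is $N+2\max(0,r_2)\approx 2.97\log_2 t$ there---essentially the paper's numerical optimum $2.965\log_2 t$---whereas your formula $r_1+r_2$ is negative at that point and misses this regime entirely. So the inequality you display does not follow from Definition~\ref{def::3good}, and the regime it drops is essentially where the true maximum is attained.

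Second, the decisive step---verifying that the constrained maximum is small enough---is never performed. The paper finishes by numerically maximizing $\max(0,r_1)+\max(0,r_2)$ over the constrained region, obtaining $2.965\log_2 t$, a margin of only $0.035\log_2 t$ below the target. In your own normalization you must show the constrained maximum of $\omega h(\omega_1/\omega)-A_1(2,\omega_1)-A_2(1,\omega_1,\omega)$ lies below $-(3/2+k)/(2k)\approx-1.056$, while the unconstrained maximum is $-1$ and the true constrained value is about $-1.06$: the whole lemma rests on a margin of order $10^{-2}$. Asserting that the maximum sits on the boundary of $R$ and that the chain-rule identity will ``calibrate'' $k$ appropriately is a plan, not a proof; nothing in the proposal establishes either claim, and with margins this thin no qualitative argument can be waved through---some explicit optimization (the paper's numerical one, or a genuinely carried-out analytic one) is unavoidable. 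A smaller, repairable gap of the same kind occurs in \eqref{eq::E2}: the statement that $\omega=1/2$ is the minimizer of $A_1(3,\cdot)$ with value $1$ is exactly what needs proof (it is true only because the typical union weight equals $1/2$, where $h$ peaks), and the paper proves $A_1(3,\omega)\ge1$ uniformly by comparison with the i.i.d.\ Bernoulli ensemble, in which the union of three columns is uniform on $\{0,1\}^N$; your appeal to typicality alone does not rule out a smaller value of $A_1(3,\cdot)$ at an atypical weight.
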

	\end{proof}

	\section{Conclusion}\label{sec::conclusion}
	
	A new approach to construct multistage group testing procedures was considered. It allows to design 3-stage and 5-stage algorithms with optimal values of $C_3(2)=2$ and $C_5(3)=3$  for the cases $s=2$ and $s=3$ respectively. The algorithm with the optimal number of tests for $s=3$ was obtained for the first time.
	
	The natural open problem is to generalize this approach to the case $s>3$ to construct algorithms with $C_p(s)=s$. Another possible direction is to prove a lower bound on $C_p(s)$ for some $p>1$, which is stronger than information-theoretic bound $s$.

	
	\section{Acknowledgement}
	
	The reported study was funded by RFBR and JSPS, project number 20-51-50007, and by RFBR through grant no. 18-31-00361~MOL\_A.
	
	
	\bibliographystyle{IEEEtran}
	\bibliography{Multistage}

\begin{thebibliography}{10}
\providecommand{\url}[1]{#1}
\csname url@samestyle\endcsname
\providecommand{\newblock}{\relax}
\providecommand{\bibinfo}[2]{#2}
\providecommand{\BIBentrySTDinterwordspacing}{\spaceskip=0pt\relax}
\providecommand{\BIBentryALTinterwordstretchfactor}{4}
\providecommand{\BIBentryALTinterwordspacing}{\spaceskip=\fontdimen2\font plus
\BIBentryALTinterwordstretchfactor\fontdimen3\font minus
  \fontdimen4\font\relax}
\providecommand{\BIBforeignlanguage}[2]{{%
\expandafter\ifx\csname l@#1\endcsname\relax
\typeout{** WARNING: IEEEtran.bst: No hyphenation pattern has been}%
\typeout{** loaded for the language `#1'. Using the pattern for}%
\typeout{** the default language instead.}%
\else
\language=\csname l@#1\endcsname
\fi
#2}}
\providecommand{\BIBdecl}{\relax}
\BIBdecl

\bibitem{dorfman1943detection}
R.~Dorfman, ``The detection of defective members of large populations,''
  \emph{The Annals of Mathematical Statistics}, vol.~14, no.~4, pp. 436--440,
  1943.

\bibitem{d1982bounds}
A.~G. D'yachkov and V.~V. Rykov, ``Bounds on the length of disjunctive codes,''
  \emph{Problemy Peredachi Informatsii}, vol.~18, no.~3, pp. 7--13, 1982.

\bibitem{ruszinko1994upper}
M.~Ruszink{\'o}, ``On the upper bound of the size of the r-cover-free
  families,'' \emph{Journal of Combinatorial Theory, Series A}, vol.~66, no.~2,
  pp. 302--310, 1994.

\bibitem{furedi1996onr}
Z.~F{\"u}redi, ``On r-cover-free families,'' \emph{Journal of Combinatorial
  Theory, Series A}, vol.~73, no.~1, pp. 172--173, 1996.

\bibitem{de2005optimal}
A.~De~Bonis, L.~Gasieniec, and U.~Vaccaro, ``Optimal two-stage algorithms for
  group testing problems,'' \emph{SIAM Journal on Computing}, vol.~34, no.~5,
  pp. 1253--1270, 2005.

\bibitem{rashad1990random}
A.~Rashad, ``Random coding bounds on the rate for list-decoding superimposed
  codes,'' \emph{Problems of Control and Information Theory}, vol.~19, no.~2,
  pp. 141--149, 1990.

\bibitem{d2014lectures}
A.~G. D'yachkov, ``Lectures on designing screening experiments,'' \emph{arXiv
  preprint arXiv:1401.7505}, 2014.

\bibitem{du2000combinatorial}
D.~Du, F.~K. Hwang, and F.~Hwang, \emph{Combinatorial group testing and its
  applications}.\hskip 1em plus 0.5em minus 0.4em\relax World Scientific, 2000,
  vol.~12.

\bibitem{Cicalese2013Fault-TolerantAlgorithms}
F.~Cicalese, \emph{{Fault-Tolerant Search Algorithms}}, ser. Monographs in
  Theoretical Computer Science. An EATCS Series.\hskip 1em plus 0.5em minus
  0.4em\relax Springer Berlin Heidelberg, 2013.

\bibitem{freidlina1975design}
V.~L. Freidlina, ``On a design problem for screening experiments,''
  \emph{Theory of Probability \& Its Applications}, vol.~20, no.~1, pp.
  102--115, 1975.

\bibitem{mezard2011group}
M.~M{\'e}zard and C.~Toninelli, ``Group testing with random pools: Optimal
  two-stage algorithms,'' \emph{IEEE Transactions on Information Theory},
  vol.~57, no.~3, pp. 1736--1745, 2011.

\bibitem{johnson2019performance}
O.~Johnson, M.~Aldridge, and J.~Scarlett, ``Performance of group testing
  algorithms with near-constant tests per item,'' \emph{IEEE Transactions on
  Information Theory}, vol.~65, no.~2, pp. 707--723, 2019.

\bibitem{berger2002asymptotic}
T.~Berger and V.~I. Levenshtein, ``Asymptotic efficiency of two-stage
  disjunctive testing,'' \emph{IEEE Transactions on Information Theory},
  vol.~48, no.~7, pp. 1741--1749, 2002.

\bibitem{kautz1964nonrandom}
W.~Kautz and R.~Singleton, ``Nonrandom binary superimposed codes,'' \emph{IEEE
  Transactions on Information Theory}, vol.~10, no.~4, pp. 363--377, 1964.

\bibitem{dyachkov1983survey}
A.~G. Dyachkov and V.~V. Rykov, ``A survey of superimposed code theory,''
  \emph{Problems of Control and Information Theory}, vol.~12, no.~4, pp. 1--13,
  1983.

\bibitem{erdHos1985families}
P.~Erd{\H{o}}s, P.~Frankl, and Z.~F{\"u}redi, ``Families of finite sets in
  which no set is covered by the union of r others,'' \emph{Israel J. Math},
  vol.~51, no. 1-2, pp. 79--89, 1985.

\bibitem{d2014bounds}
A.~G. D’yachkov, I.~V. Vorob’ev, N.~Polyansky, and V.~Y. Shchukin, ``Bounds
  on the rate of disjunctive codes,'' \emph{Problems of Information
  Transmission}, vol.~50, no.~1, pp. 27--56, 2014.

\bibitem{coppersmith1998new}
D.~Coppersmith and J.~B. Shearer, ``New bounds for union-free families of
  sets,'' \emph{the electronic journal of combinatorics}, vol.~5, no.~1, p.~39,
  1998.

\bibitem{v2019twostage}
I.~{Vorobyev}, ``A new algorithm for two-stage group testing,'' in \emph{2019
  IEEE International Symposium on Information Theory (ISIT)}, July 2019, pp.
  101--105.

\bibitem{d2016hypergraph}
A.~G. D'yachkov, I.~V. Vorobyev, N.~Polyanskii, and V.~Y. Shchukin, ``On a
  hypergraph approach to multistage group testing problems,'' in
  \emph{Information Theory (ISIT), 2016 IEEE International Symposium on}.\hskip
  1em plus 0.5em minus 0.4em\relax IEEE, 2016, pp. 1183--1191.

\end{thebibliography}
	\appendix
	\section{Proofs of Lemmas}
\begin{proof}[Proof of Lemma~\ref{lem::3Def1st}]
	Denote an event that a property $i$ from Definition~\ref{def::3good}, $1\leq i\leq 5$, is violated as $B_i$. Let us estimate probabilities of these events from above.
	\begin{enumerate}
		\item
		Fix a vector $\y\in\{0,1\}^N$, $|\y|=w$, $\omega=w/N$. Denote an event that we have a $(3, 1)$ configuration of size $L_1$ in $H(X, 3, \y)$ as $B_{1,\y}$.
		Let $Y_{\y}$ be a random variable equals to the number of $(3,1)$ configurations of size $L_1$. We upper bound the probability of $B_{1,\y}$ by the mathematical expectation of $Y_{\y}$, i.e. $\Pr(B_1)\leq \sum\limits_{\y\in\{0,1\}^N}\Pr(B_{1,\y})\leq \sum\limits_{\y\in\{0,1\}^N}\mathbb{E} Y_{\y}$. To estimate $\mathbb{E} Y_{\y}$ we represent it as a sum of $\binom{t}{2L_1+1}\frac{(2L_1+1)!}{L_1!2^{L_1}}<t^{2L_1+1}$ indicators corresponding to all possible $(3,1)$ configurations of size $L_1$.
		
		\begin{align*}
		\mathbb{E} Y_{\y}<t^{2L_1+1}(\Pr\nolimits_2(2, p, w))^{L_1}
		\\=t^{2L_1+1}2^{-NL_1(A_2(2, p, \omega)+o(1))}
		\\\leq
		t^{2L_1+1}2^{-\log_2t(2L_1+10)(1+o(1))}\\<t^{-9+o(1)}.
		\end{align*}
		For $p=1-0.5^{1/3}$, $\max_{p\le\omega\le 3p}    \frac{1}{A_2(2, p, \omega)}\approx 1.35$, therefore, $N<(3+o(1))\log_2t$;
		\begin{align*}
		\Pr(B_1)\leq 2^N\max_{\y}\mathbb{E} Y_{\y}<t^{-6+o(1)}\to 0.
		\end{align*}
		\item
		Let $Y_{\y}$ be a random variable equals to the number of $(3,0)$ configurations of size $M=10\max(t^3\Pr_1(3, w), N)$; then $\Pr(B_2)\leq 2^N\max\limits_{\y}\mathbb{E} Y_{\y}$.
		\begin{align*}
		\mathbb{E} Y_{\y}<\binom{t}{3M}\frac{(3M)!}{M!(3!)^M}(\Pr\nolimits_1(3, w))^M
		\\\leq
		\frac{t^{3M}}{M!6^M}(\Pr\nolimits_1(3, w))^M<\left(\frac{t^3\Pr_1(3, w)e}{6M}\right)^M\\
		\leq \left(\frac{e}{60}\right)^M<\left(\frac{e}{60}\right)^N.
		\end{align*}
		\item
		Fix two binary vectors $\y$ and $\y_1$ of length $N$ such that $\y\bigvee\y_1=\y$, $|\y_1|=w_1$, $|\y|=w$. Let $Y_{\y, \y_1}$ be a random variable equals to the number of columns $z$ in $X$ such that $\y_1\bigvee z=\y$. 
		\begin{align*}
		\Pr(B_4)\leq 4^N\max\limits_{\y_1,\y}\Pr(Y_{\y, \y_1}>10B(N, t))\\
		\leq 4^N\max\limits_{\y_1,\y}\frac{t^{10B(N, t)}}{(10B(N, t))!}(\Pr\nolimits_2(1, w_1, w))^{10B(N, t)}\\
		\leq
		4^N\max\limits_{\y_1,\y}\left(\frac{t\Pr\nolimits_2(1, w_1, w)e}{10B(N, t)}\right)^{10B(N, t)}.
		\end{align*}
		\begin{enumerate}
			
			\item 
			If $t\Pr_2(1, w_1, w)\geq t^{-\frac{1}{\sqrt{L_1}}}$, then $\frac{t\Pr\nolimits_2(1, w_1, w)}{B(N, t)}\leq 1$ and $B(N, t) \geq N$, hence
			\begin{align*}
			4^N\max\limits_{\y_1,\y}\left(\frac{t\Pr\nolimits_2(1, w_1, w)e}{10B(N, t)}\right)^{10B(N, t)}\\\leq
			4^N\left(\frac{e}{10}\right)^{10N}\to 0.
			\end{align*}
			\item 
			If $t\Pr_2(1, w_1, w)< t^{-\frac{1}{\sqrt{L_1}}}$, then $\frac{t\Pr\nolimits_2(1, w_1, w)}{B(N, t)}\leq 1$ and $B(N, t) \geq N$, hence
			\begin{align*}
			4^N\max\limits_{\y_1,\y}\left(\frac{t\Pr\nolimits_2(1, w_1, w)e}{10B(N, t)}\right)^{10B(N, t)}\\\leq
			4^N\max\limits_{\y_1,\y}\left(t\Pr\nolimits_2(1, w_1, w)\right)^{10B(N, t)}\\\leq
			4^Nt^{-\frac{L_1}{\sqrt{L_1}}}<t^{6-\sqrt{L_1}}\to 0.
			\end{align*}
		\end{enumerate}
			
		\item
		Let $Y_{\y, w_1}$ be a random variable equals to the number of sets $S$ of cardinality $M=10\max\left(N, \binom{wN}{w_1N}\Pr_1(2, w_1)\right)$, consisting of pairs of non-intersecting columns $\z_1$, $\z_2$ from matrix $X$ such that $\z_1\bigvee \z_2\bigvee \y=\y$,  $|\z_1\bigvee \z_2|=w_1$; then $\Pr(B_5)\leq N2^N\max\limits_{\y, w_1}\mathbb{E} Y_{\y, w_1}$.
		
		\begin{align*}
		&N2^N\max\limits_{\y, w_1}\mathbb{E} Y_{\y, w_1}\\
		&<N2^N\max\limits_{w, w_1}t^{2M}/M!\left(\binom{w}{w_1}\Pr_1(2, w_1)\right)^M\\
		&\leq N2^N\max\limits_{w, w_1}\left(\frac{t^2\binom{w}{w_1}\Pr_1(2, w_1)e}{M}\right)^M\\
		&\leq
		N2^N\left(\frac{e}{10}\right)^N\to 0.
		\end{align*}
		
	\end{enumerate}
\end{proof}

\begin{proof}[Proof of Lemma~\ref{lem::3defVerDeg}]
	Seeking for a contradiction assume that vertex $v$ has degree at least $L_1$ in graph $G'$. It means that there exist $L_1$ vertices $v_1, \ldots, v_{L_1}$, and $L_1L_2$ hyperedges $e_{i,j}=(v, v_i, v_{i, j})\in E$, $1\leq i\leq L_1$, $1\leq j\leq L_2$. We show that in this case there is a $(3, 1)$ configuration of size $L_1$; more precise, it is possible to find a set of $L_1$ hyperedges $e_{1, j_1}, e_{2, j_2}, \ldots, e_{L_1, j_{L_1}}$ such that any two of these hyperedges have only one common vertex $v$. Indeed, we can construct such set by choosing hyperedges one by one from $i=1$ to $i=L_1$. At each step $i$ we have $L_2$ candidates, with at most $L_1+ (i-1)<L_2$ of which are prohibited; the existence of such $(3, 1)$ configuration contradicts the first property from definition~\ref{def::3good}.
\end{proof}

\begin{proof}[Proof of Lemma~\ref{lem::3defH2}]
	 Let $v$ be a vertex of $H_2$ with a degree at least $M=2L_1L_2$. Consider hyperedges $e_1=(v, v_1, u_1), \ldots, e_{2L_1L_2}=(v, v_{M}, u_{M})$. Construct a maximal $(3,1)$ configuration $e_{i1}, \ldots, e_{M_1}$, consisting of these hyperedges. Its size $M_1$ is less than $L_1$ by the property~1 from Definition~\ref{def::3good}. Consider pairs of vertices $(v, v_{i_1}), (v, u_{i_1}), \ldots, (v, v_{i_{M_1}}), (v, u_{i_{M_1}})$. Every hyperedge $e_i$ contains at least one such pair as a subset due to the fact that the constructed $(3, 1)$ configuration is maximal. From the other hand, no pair can be included in $L_2$ hyperedges. Indeed, in this case we would have a $(3, 2)$ configuration of size $L_2$, which contradicts the definition of the hypergraph $H_2$. Therefore, $M< L_2\cdot 2L_1$, and the first claim is proved. 
	
	The second claim is an immediate consequence of the first one. Indeed, consider the biggest $(3, 0)$ configuration in the hypergraph $H_2$. Say that it has the cardinality $M'$. Every hyperedge of $H_2$ has at least one common vertex with such configuration, therefore, the total number of hyperedges in $H_2$ is at most $3M'\cdot 2L_1L_2=M'\cdot 6L_1L_2$.
\end{proof}
\begin{proof}[Proof of Lemma~\ref{lem::3defDirGr}]
	Fix a vertex $v$. Denote as $S$ the set of edges $e\in E_1$, such that $v$ is an additional vertex for an edge $e$. Then $\deg_{out}(v)< L_1+2|S|$. Indeed, if $v_1$ is an additional vertex for an edge $e=(v, v_1, v_2)$, then we add only one arc $(v, v_2)$ from $v$. Moreover, $(v, v_2)$ is an edge of the graph $G'$. By Lemma~\ref{lem::3defVerDeg} the number of such arcs is less then $L_1$.
	
	Estimate the cardinality of the set $S=\{e_1, e_2,\ldots, e_{|S|}\}$. The reasoning is very similar to the proof of Lemma~\ref{lem::3defH2}. Construct a maximal $(3,1)$ configuration $e_{i1}, \ldots, e_{M_1}$, consisting of edges from $S$. The size of this configuration is less than $L_1$ by the property~1 from Definition~\ref{def::3good}. 
	Consider pairs of vertices $(v, v_{i_1}), (v, u_{i_1}), \ldots, (v, v_{i_{M_1}}), (v, u_{i_{M_1}})$. Every edge $e_i$ contains at least one such pair as a subset by construction. From Lemma~\ref{lem::3defVerDeg} we conclude, that no pair can be included in $L_1$ edges. Hence, $|S|<2L_1^2$.
	
	So, the total out-degree is less than $2L_1^2+L_1\leq 3L_1^2$.
	
\end{proof}
\begin{proof}[Proof of Lemma~\ref{lem::3defPart}]
	Introduce a new directed graph $G'''(E_2, E''')$. The vertex set of this graph coincides with the set $E_2$. We draw an arc $(e_1, e_2)$, $e_1, e_2\in E_2$ in three cases
	\begin{enumerate}
		\item $e_1\cap e_2\ne \emptyset.$
		\item There is a hyperedge (or edge) $e\in E_2\cup E'$, such that $e_1\cap e\ne \emptyset,$ $e_2\cap e\ne \emptyset.$
		\item There is a hyperedge $e=(v_1, v_2, v_3)\in E_1$ with an additional vertex $v_1$, such that
		$v_1\in e_1$, $e\cap e_2\ne \emptyset$.
	\end{enumerate}
	Fix a hyperedge $e\in E_2$. Estimate its out-degree. By Lemma~\ref{lem::3defH2}, the first condition gives at most $6L_1L_2$ arcs, and the second -- at most $24L_1^2L_2^2+6L_1^2L_2$. From Lemmas~\ref{lem::3defH2},~\ref{lem::3defDirGr}, we conclude that the third condition gives at most $18L_1^2\cdot 2L_1L_2<18L_1^2L_2^2$ arcs. The total out-degree is less than $48L_1^2L_2^2$, therefore, it is possible to partition vertices of $G'''$ into $96L_1^2L_2^2$ sets in such a way that the endpoints of every arc are in different sets. It is readily seen that this partition of hyperedges from $E_2$ has all required by Lemma~\ref{lem::3defPart} properties.
\end{proof}
\begin{proof}[Proof of Lemma~\ref{lem::3defNumEdges}]
	To prove the first inequality we estimate the cardinality of $E_2$. From the second property of Definition~\ref{def::3good} and Lemma~\ref{lem::3defH2} we conclude that
	$$
	|E_2|\leq \max_{w}60L_1L_2\max(t^3\Pr\nolimits_1(3, w), N)
	$$
	and
	$$
	\log_2|E_2|\leq \max(0, \max_{w}\log_2\Pr\nolimits_1(3, w)+3\log_2t) + o(\log_2t).
	$$
	Since $N<3\log_2t$ it is sufficient to verify that $N+\max\limits_{w}\log_2\Pr_1(3, w)<o(\log_2t)$.
	
	To estimate $\Pr_1(3, w)$ consider a different ensemble of random matrices, in which each entry is chosen independently and equals 1 with probability $p$. Let $\Pr_1'(3, w)$ be equal to the probability that union of three columns equals a fixed vector of weight $w$. It is readily seen that 
	$\Pr_1'(3, w)=(0.5+o(1))^N$. From the other hand, 
	$$
	\Pr\nolimits_1'(3, w)>\left(\binom{N}{\lfloor pN\rfloor}p^{\lfloor pN\rfloor}(1-p)^{n-\lfloor pN\rfloor}\right)^3\Pr\nolimits_1(3, w),
	$$
	therefore,
	$$
	\log_2\Pr\nolimits_1(3, w)<\log_2\Pr\nolimits_1'(3, w)+o(N)=-N+o(N).
	$$
	

	So,
	$$N+\max\limits_{w}\log_2\Pr\nolimits_1(3, w)<o(N)=o(\log_2t)$$
	which proves the inequality~\eqref{eq::E2}.
	
	Estimate the cardinality of $E_1$ using properties~3, 4 of Definition~\ref{def::3good} as follows
	
	\begin{align*}
	|E_1|\leq \max_{w_1, w}10\max\left(N, \binom{w}{w_1}t^2\Pr\nolimits_1(2, w_1)\right)\cdot 10B(N, t);
	\end{align*}
	
	\begin{align*}
	\log_2|E_1|\leq \max_{\omega_1, \omega}\left[\max(0, r_1) + \max(0, r_2)\right]+ o(\log_2t),
	\end{align*}
	where $$r_1=N(\omega h(\omega_1/\omega) +\omega_1h(p/\omega_1)+ph((\omega_1-p)/p) - 2h(p)) + 2\log_2t,$$
	$$r_2=\log_2t+N(\omega_1h((\omega-p)/\omega_1)-h(p)).$$ 
	
	If $t\Pr_2(1, w_1, w)<t^{\frac{-1}{\sqrt{L_1}}}$ then by property~3 of Definition~\ref{def::3good} we have less than $L_1$ additional columns for every column of weight $w_1$; therefore, the set of hyperedges $E_1$ is empty for such $w_1$ and $w$.
	
	From this moment consider only $w_1$ and $w$ such that $t\Pr_2(1, w_1, w)\geq t^{\frac{-1}{\sqrt{L_1}}}$. For such parameters $r_2\geq o(\log_2t)$, therefore,
	\begin{multline*}
	N+2\log_2|E_1|\\
	\leq N +2\max_{\{\omega_1, \omega|r_2\geq 0\}}\left[\max(r_2, r_1+r_2)\right]+ o(\log_2t).
	\end{multline*}
	We find that maximum numerically and obtain
	\begin{multline*}
	N+2\log_2|E_1|
	<2.965\log_2t+o(\log_2t)<3\log_2t(1+o(1)).
	\end{multline*}

\end{proof}

\end{document}